\newtheorem{example}{Example}
\DeclareMathOperator*{\cut}{cut}
\newtheorem{theorem}{Theorem}
\newtheorem{corollary}{Corollary}
\newtheorem{lemma}{Lemma}
\newtheorem{remark}{Remark}
\newtheorem{definition}{Definition}
\begin{document}

\date{}
\title{Optimum Transmission Delay for Function Computation in NFV-based Networks: the role 
of Network Coding and Redundant Computing }
\author{Behrooz Tahmasebi~~~ Mohammad Ali Maddah-Ali~~~ Saeedeh Parsaeefard\\~~~Babak Hossein Khalaj
\thanks{Behrooz Tahmasebi, Mohammad Ali Maddah-Ali, and Babak Hossein Khalaj are  with the Department of Electrical Engineering, Sharif University of Technology, Tehran, Iran (emails:behrooz.tahmasebi@ee.sharif.edu, maddah\_ali@sharif.edu,  khalaj@sharif.edu). 
 Babak Hossein Khalaj is also with  the School of Computer Science, Institute for Research in Fundamental Sciences (IPM), Tehran, Iran.
Saeedeh Parsaeefard is with the Department of Communication Technologies,
Iran Telecommunication Research Center (ITRC), Tehran, Iran (email:
s.parsaeifard@itrc.ac.ir).}
}
 \renewcommand\footnotemark{}
\maketitle

\begin{abstract}

In this paper, we study the problem of delay minimization in NFV-based networks. In such systems, the ultimate goal of any request is to compute a sequence of functions in the network, where each function
can be computed at only a specific subset of network nodes. In conventional approaches, for each function, we choose one node from the corresponding subset of the nodes to compute that function. In contrast, in this work, we allow each function to be computed in more than one node, redundantly in parallel, to respond to a given request. We argue that such redundancy in computation not only improves the reliability of the network, but would also, perhaps surprisingly, reduce the overall transmission delay.
In particular, we establish that by judiciously choosing the subset of nodes which compute each function, in conjunction with a linear network coding scheme to deliver the result of each computation, we can characterize and achieve the optimal end-to-end transmission delay.
 In addition, we show that using such technique, it is possible to significantly reduce the transmission delay as compared to the conventional approaches. 
 In fact, in some scenarios, such reduction can even scale with the size of the network,
 where  by increasing the number of nodes that can compute the given function in parallel by a multiplicative factor, the end-to-end delay will also decrease by the same factor.
Moreover, we show that while finding the subset of nodes for each computation, in general, is a complex integer  program, approximation algorithms can be proposed to reduce the computational complexity. In fact, for the case where the number of computing nodes for a given function is upper-bounded by a constant, a dynamic programming scheme can be proposed to find the optimum subsets in polynomial times. Our numerical simulations confirm the achieved gain in performance in comparison with conventional approaches.

\end{abstract}

\textbf{Index terms$-$}
Delay-computation  trade-off, 
network coding, 
network function virtualization (NFV), 
network optimization, 
redundancy,
reliability.

\section{Introduction}

 Network function virtualization (NFV) is the state-of-the-art  architecture for future  data networks.
 NFV is an enabler to network slicing and cloud over core in 5G which can considerably improve the efficiency of resource utilization \cite{taleb1,taleb2,taleb3,taleb4,taleb5,taleb6,taleb7}.
  In NFV-based networks, network functions are virtualized and  the resulting vitrual network functions (VNFs) can be computed at any node  where they are placed. This modification in the architecture of data networks leads to  high flexibility and improvement in performance measures \cite{nfvra}.  
Consequently, resource management will become  a key problem in NFV realizations. In the resulting network function virtualization resource allocation (NFV-RA) problem, new challenging sub-problems such as  chain composition, forwarding graph embedding, and  scheduling will arise and need to be addressed efficiently\cite{nfvra}.

In such networks, the ordering of the functions that are computed in the network is known as the service function chaining problem\cite{chain1,chain2,chain3}. 
The  forwarding graph embedding problem is to assign each VNF to one node,  based on the set of  requests. This problem is known to be NP-hard \cite{nfvra}. Some architectures are proposed to address  this issue \cite{fgraph1,fgraph2, fgraph3}.
 The final stage in the NFV-RA  is the problem of scheduling,  where we want to  assign the execution of a given VNF to one  node of the network, in addition to setting the  execution times of those functions at the network  nodes, given the set of the requests.    This problem is formulated in a basic form in \cite{sch1,sch2}, and some solutions are proposed in \cite{sch3}.  
 Joint NFV-RA and admission control problem have been recently proposed in the literature to increase the resource efficiency, e.g., \cite{MAT}.

 In this paper,  we deal with the same problem of resource allocation in NFV-based networks, with the  objective of minimizing the end-to-end transmission delay. However, we address the problem through a completely novel analytic approach. 
 The key contribution of this work is to demonstrate that under certain conditions, it is possible to reduce the end-to-end delay for computation of VNFs over the network  if functions are allowed to be executed redundantly and in parallel over a number of nodes, an assumption that has not been properly exploited earlier in the literature.
 Another key observation in this paper is that by allowing a function to be computed redundantly over a number of nodes, a trade-off between overall computational power and end-to-end delay is observed. 
 To the best of our knowledge, 
 this is the first time that such trade-off is investigated in VNF computation.
 To achieve this trade-off, we rely on two components: (1) redundancy in computation as mentioned above, (2) network coding in delivery. 

It is interesting to note that, recently, the role of coding for achieving fundamental trade-offs between computation and communication has been investigated in other problems \cite{code1,code2,code3,code4, code5,code6,opt1,opt2,opt3,gcode1,gcode2,gcode3}.  
In \cite{code1}, the authors consider the framework of Map-Reduce, which is a known framework for distributed computing, and  show that  coding techniques can improve the performance of the system significantly. 
Such improvements in  performance measures via coding theory  are also observed in wireless distributed computing  problem \cite{code4}.
The problem of network stragglers is also addressed through use of coding theory for computational tasks.
For example, in the problems of distributed matrix multiplication \cite{code2} and coded Fourier transform \cite{code3},  the use of coding theory for computational tasks can alleviate the problem of network stragglers. 
Also in the problems of distributed optimization, where  an optimization problem is divided into  sub-optimization problems to reduce the computational complexity and  decrease  the delay using parallel servers,  the problem of straggler servers is  challenging. To deal with this challenge, recently,  encoded distributed  optimization is introduced \cite{opt1,opt2,opt3}.
Another example is to use the coding techniques in distributed gradient descent algorithm, which is an important algorithm  in many problems of  machine learning and data sciences\cite{gcode1,gcode2,gcode3}. The benefit of  gradient coding  is also for avoiding the stragglers.
 Also in the NFV-based systems,  the coding theoretic approaches  are recently used in \cite{code5,code6} to reduce the effect of stragglers in VNF computing nodes.
In this work, we introduce network coding to NFV-based networks to achieve optimum transmission delay.

 For finding the minimum end-to-end transmission delay analytically, we propose an auxiliary multicast problem, which captures all  the assumptions we have made before.
 This auxiliary problem shows that, we can  optimize the end-to-end delay by applying the single-source multicast  theorem \cite{code7} to the model considered in this paper.  Thus, the problem of minimization of the end-to-end transmission delay can be formulated mathematically using the capacity of the auxiliary multicast problem, which is known to be equal to the min-cut in the network coding terminology. In this way, we  formulate the problem of  finding the subsets of  nodes that must compute  functions  in each round  by an optimization problem,  specifically as an integer    programming problem.
 Following the results of network coding theory, it is shown that the minimum end-to-end delay can be achieved only via  linear codes \cite{code10,code11}. Actually, there are  polynomial time algorithms for network code construction to exploit the capacity of the system   \cite{code8}. 
 We also notice that in our model,  we do not assume that the functions have any structure, for example they are linear.
Our results hold for  arbitrary functions. 
 
To use these results in practice, as we stated before, a complex  integer    programming problem  needs to be solved. 
This problem corresponds to  obtaining  the set of nodes that will compute any specific function  in  parallel.
 While  finding the exact solution of this  optimization problem has high computational complexity in general, we propose an approximation algorithm for finding the solution with a moderate polynomial complexity. 
We show that  the proposed algorithm outperforms the traditional no-redundancy approaches via numerical simulations.
Furthermore, if we have an  upper bound of $\alpha$  on the number of nodes that compute the given function at each round of function computation, where $\alpha$ does not scale with the size of the network, an optimal algorithm is also provided  with polynomial complexity. This algorithm is based on the dynamic programming.  

At the final stage, we present some numerical simulations and show the gain of the proposed scheme in several  cases, specially  in comparison with  no-redundancy scheme. 
This is observed, both theoretically and numerically, that we have a trade-off between the end-to-end delay and the  processing cost which means  the maximum number of nodes that can compute a given function in parallel.
It is interesting to note that, in fact, there are some scenarios that by increasing the number of nodes that can compute a given function in parallel by a multiplicative factor of $\alpha$, the end-to-end delay will also decrease by the same factor.
To the best of our knowledge, it is the first time in the literature that such trade-off is addressed.
Note that  a natural question about our method is whether  it is possible to bound the gain of redundant computing by a constant multiplicative factor for all networks. 
As a contribution, we have developed some examples in which the gain of redundant computing concatenated with network coding scales with the size of the network. Hence, such constant does not exist.

The rest of this paper is organized as follows. In Section \ref{ps}, we define the problem mathematically. In Section III, we state the main result. The proposed algorithms can be found in Section IV.  We present the numerical simulations in Section V and finally, Section VI concludes the paper.

\textbf{Notation.} For any positive integer $K$, we define $[K]:=\{1,2,\ldots,K\}.$ Also, vectors are denoted by bold letters, like $\bold{x}.$

\section{Problem Statement}\label{ps}

\begin{figure*}
\centering
\includegraphics[scale  = 0.32]{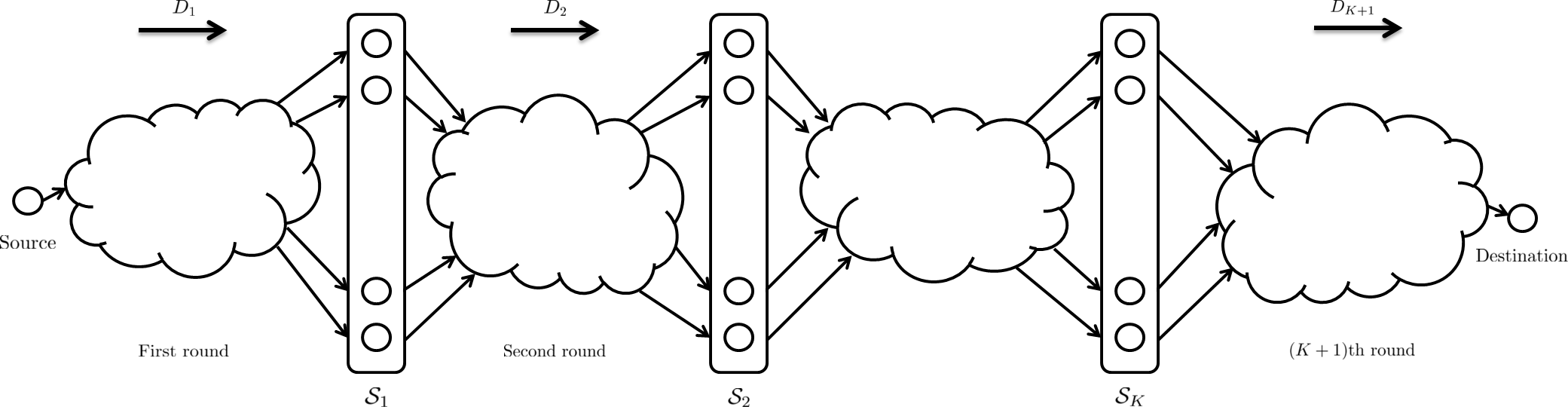}
\caption{A responding scheme. There are $K+1$ rounds of data transmission, sequentially from one set to the next. In any set $\mathcal{S}_k$, a similar function computing task is performed in  all of the nodes of the set, simultaneously and in parallel. 
Also in any transmission round $k$, there exists the infrastructure graph $\mathcal{G}$ such that a common data must be delivered from the set $\mathcal{S}_{k-1}$ to $\mathcal{S}_{k}$ using it. The networks, which are shown by clouds in the figure,  are $K+1$ copies of the infrastructure graph.
}
\end{figure*}

Consider a communication network modeled by a  directed 
 weighted  graph. We represent the set of   nodes in this network  by     $\mathcal{V}$ and  the set of  directed links by $\mathcal{E}\subseteq \mathcal{V}\times \mathcal{V}$. 
For any directed link $e=(u,v) \in \mathcal{E}$, its communication capacity  is denoted by $w(e)$. This means that it is possible to send a packet of length $\lfloor T w(e) \rfloor$ bits from $v$ to $u$ via this  link, during $T$ time slots and assuming that the links are error free. Let us denote this network by $\mathcal{G}=(\mathcal{V},\mathcal{E},w(.))$, which  corresponds to the network function virtualization infrastructure (NFVI) in the literature.

There is a library of  functions denoted by  $\mathcal{F}$ which indicates the set of  VNFs in our model. We  assume that  each function\footnote{In this paper, we utilize VNF and function interchangeably.} 
$f\in \mathcal{F}$ is a deterministic mapping from the set $\{0,1\}^{U_{f}}$ to the set $\{0,1\}^{L_{f}}$, for some positive integers $U_{f}$ and $L_{f}$, respectively.  
 We denote  the set of  nodes that have ability to compute a function 
(or equivalently a VNF) $f \in \mathcal{F}$ by $\mathcal{V}_{f}\subseteq \mathcal{V}$  for any $f \in \mathcal{F}$. 
 This means that in the placement phase of VNFs, we place any function $f$ into a subset of the network nodes, denoted by $\mathcal{V}_{f}\subseteq \mathcal{V}$. 
In this setup, we assume that $\mathcal{V}_f$ is predetermined and 
 the run-times of a function (or VNF) at different nodes are not different, i.e., the processors are  assumed to be homogeneous.
 
In our model, we does not put any restriction to the number of functions that a node can compute.
 However,  for  simplicity, we assume that any node can compute at most one function, i.e., $\mathcal{V}_f\cap \mathcal{V}_g = \emptyset$  for any distinct functions $f$ and $g$. This assumption does not affect the generality of the problem.  
Note that,  for any  node $v \in \mathcal{V}$ that can compute $k>1$  distinct functions (VNFs), we can replace it by $k$ new nodes $\tilde{v}_{1}, \tilde{v}_{2},\ldots, \tilde{v}_{k}$ that can run exactly one of the $k$ functions and form a new directed weighted graph as follows. All  the links from or to $v$ in the  previous network are considered from or to $\tilde{v}_{1}$ in the new network. Also, for any distinct  $k_1,k_2\in  [K]$,  we connect $\tilde{v}_{k_1}$ to $\tilde{v}_{k_2}$ by a directed link with infinite capacity, i.e.,  $w(\tilde{v}_{k_1},\tilde{v}_{k_2})=\infty$. The other nodes, links and weights   are remained unchanged. 
With this modification, the resulting network and the original network are equivalent in terms of delay and the achievable scheme. This shows that the condition is not restrictive.

There is a request tuple  $\mathscr{R}=(s,d,\bold{x},(f_1,f_2\ldots,f_K))$ for the computation in   network which is defined as follows. 
The transmitter node or source node $s\in \mathcal{V}$ has the data packet $\bold{x}\in\{0,1\}^{L_0}$, where $L_0$ is an arbitrary positive integer.  For any $k \in [K]$, $f_{k} \in \mathcal{F}$ is the function that must be computed in the $k^{\text{th}}$ round of the function computation problem. 
The receiver or destination node $d\in \mathcal{V}$  is interested to be delivered  the sequentially computed result  $f_{K}(f_{K-1}(\ldots (f_{1}(\bold{x}))\ldots))$. 
We assume that the chain is predetermined and the order cannot be changed.
Note that the output of the $k^{\text{th}}$  function $f_{k}$ must match in the size with the input of the $(k+1)^{\text{th}}$ function $f_{k+1}$. This shows that  there is a sequence of positive integers $L_0,L_1,L_2,\ldots L_{K}$, such that for any $ k \in  [K]$ we have $f_{k}: \{0,1\}^{L_{k-1}}\rightarrow \{0,1\}^{L_{k}}$. Throughout this paper, we assume that each $L_k$ is large enough. This is essential to establish the results\footnote{
Based on the concepts of network coding theory,
it is necessary for the message size (or equivalently $L_k$) to be large enough, in order to ensure existence of  a capacity achieving network code. See \cite{code8} and references therein for more details. 
}.

The problem is as follows. Given the network graph  $\mathcal{G} = (\mathcal{V},\mathcal{E},w(.))$, the library of functions $\mathcal{F}$, a request tuple  $\mathscr{R}=(s,d,\bold{x},(f_1,f_2\ldots,f_K))$ and the sets $\{\mathcal{V}_{f_k}\}_{k \in [K]}$, how can the network use its resources to response $\mathscr{R}$,  in order to minimize the end-to-end  delay? First  the notions of  \textit{responding to a request} and \textit{end-to-end delay} must be clarified.

Considering a request  $\mathscr{R}=(s,d,\bold{x},(f_1,f_2\ldots,f_K))$, 
in our model, a responding  scheme to this request  consists of two steps. 
In the first step, a sequence of non-empty sets  $(\mathcal{S}_1,\mathcal{S}_2,\ldots,\mathcal{S}_K)$ must be selected, where $\mathcal{S}_k\subseteq \mathcal{V}_{f_k}$ for any $k\in [K]$. For any $k$, $\mathcal{S}_k$ is the set of the nodes that will compute the function $f_k$ in the $k^{\text{th}}$  round of the function computation problem. 
In other words, all of  the nodes in $\mathcal{S}_k$ must compute the function $f_k$. Apparently $|\mathcal{S}_k| \ge 1$. Later, we will see why it may be beneficial to have $|\mathcal{S}_k| > 1$. 
 Also for the simplicity in notation, we define $\mathcal{S}_0=\{s\}$ and  $\mathcal{S}_{K+1}=\{d\}$.

The second step consists of  $K+1$ transmission rounds. 
In the first round,  the  data packet $\bold{x}$ must be delivered from the source node $s$ to all of the   nodes of the set $\mathcal{S}_1$ simultaneously, using the network links. After this round, all of the nodes of the set $\mathcal{S}_1$ compute the function $f_1(\bold{x})$ simultaneously and in parallel manner. 
Now $f_1(\bold{x})$ is available in  all of the nodes of the set $\mathcal{S}_1$.
 In the second round, $f_1(\bold{x})$ must be delivered from the set of nodes $\mathcal{S}_1$  to each node of the set   $\mathcal{S}_2$, separately.
  Therefore, in the $k^{\text{th}}$ round, the nodes of the set  $\mathcal{S}_{k-1}$ have $f_{k-1}(f_{k-2}(\ldots (f_1(\bold{x}))\ldots))$ and they must deliver this common data to each node of the set    $\mathcal{S}_{k}$, using the network links.
   Note that by this definition, the functions must be computed independently and sequentially.  In other words, the order of function computing is important. Also, for any $k \in [K]$,  the value of $f_{k}(f_{k-1}(\ldots (f_1(\bold{x}))\ldots))$ cannot be computed, unless the value of $f_{k-1}(f_{k-2}(\ldots (f_1(\bold{x}))\ldots))$ is available at a node that can compute $f_k$.  

More formally, in the second step,  the sets $(\mathcal{S}_1,\mathcal{S}_2,\ldots,\mathcal{S}_K)$ are available. The data must sequentially be delivered from one set to another. Hence, we assume that a responding scheme must include a sequence $(\mathfrak{D}_1,\mathfrak{D}_2,\ldots,\mathfrak{D}_{K+1})$ of delivery schemes. For any $k \in [K+1]$, $\mathfrak{D}_k$ is a transmission policy of the common content $f_{k-1}(f_{k-2}(\ldots (f_1(\bold{x}))\ldots))$, from the set $\mathcal{S}_{k-1}$ to the set $\mathcal{S}_{k}$. 
Notice that all of the nodes of the set $\mathcal{S}_{k-1}$ have the common data $f_{k-1}(f_{k-2}(\ldots (f_1(\bold{x}))\ldots))$ and they want to deliver this common data to the  nodes of the set $\mathcal{S}_{k}$ where  all of the nodes in $\mathcal{S}_{k}$ must be delivered the content $f_{k-1}(f_{k-2}(\ldots (f_1(\bold{x}))\ldots))$.
In this paper, we consider a multistage graph where in each stage, i.e., between $S_{k-1}$ and $S_k$, there exists the infrastructure network  $\mathcal{G}$   (see Figure 1). Therefore, in Figure 1, there exist $K + 1$ copies of infrastructure network for $K + 1$ stages where each cloud is utilized to demonstrate this copy of $\mathcal{G}$  for each stage.

For a responding scheme which corresponds to $(\mathcal{S}_1,\mathcal{S}_2,\ldots,\mathcal{S}_K)$ and  $(\mathfrak{D}_1,\mathfrak{D}_2,\ldots,\mathfrak{D}_{K+1})$, the \textit{end-to-end  delay} is defined as follows. For any $k\in [K+1]$, assume that the delay of the  transmission round $k$ is denoted by $D_k$ which depends on the request and the responding scheme. More precisely, $D_k$  is a function of the sets $\mathcal{S}_{k-1}$ and $\mathcal{S}_{k}$,  and  the delivery scheme $\mathfrak{D}_k$. The end-to-end transmission delay $D$  is defined as the summation  of the transmission round delays, i.e., $D:=\sum_{k=1}^{K+1}D_k$.
In this paper, our objective is to minimize $D$, subject to the network considerations and the request. 
The design parameters are the sets $\{\mathcal{S}_k\}_{k\in [K]}$ and  the delivery schemes $\{\mathfrak{D}_k\}_{k \in [K+1]}$.

 Note  that in the traditional responding schemes, any  set $\mathcal{S}_{k}$    is a one-element set which means that   $f_k$ is computed at exactly one node.  We call this scheme as no-redundancy  approach.
  In this paper, we omit this restriction and allow  the responding schemes to compute a function redundantly in several nodes of the network in parallel. 
  We will show that via redundancy in  function computing, the network delay can decrease, compared to the no-redundancy  schemes.
  We note that this approach for responding a request, however, may increase the delay in the first round of  the data transmission, because the content $\bold{x}$ must be delivered to more than one nodes. On the other side, for the second round of data transmission, the content $f_1(\bold{x})$ is available in more than one nodes. 
This means that redundantly function computing may decrease the delay in the second round of data transmission. 
Similarly, this phenomenon is observed in all rounds of computation. 
This shows a trade-off for  choosing one-element sets or larger sets.

\section{Main Results}
In this section, we state the main results of this  paper. In particular, we propose a scheme that minimizes the end-to-end  delay $D$. Furthermore, we prove that the proposed  scheme is optimal, for the objective of delay minimization. First we give some examples to explain the method.

\subsection{Illustrative Examples}

\begin{example} \normalfont

 \begin{figure}
 \centering

 \begin{subfigure}	[b]{1\textwidth}
 \centering

\includegraphics[scale = 0.9]{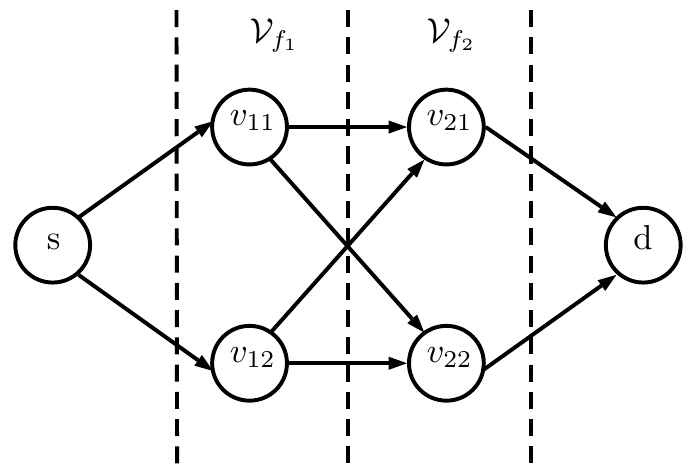}

\caption{The network of Example 1. }

\end{subfigure}

\hfill

\begin{subfigure}[b]{1\textwidth}
 \centering

\includegraphics[scale = 0.9]{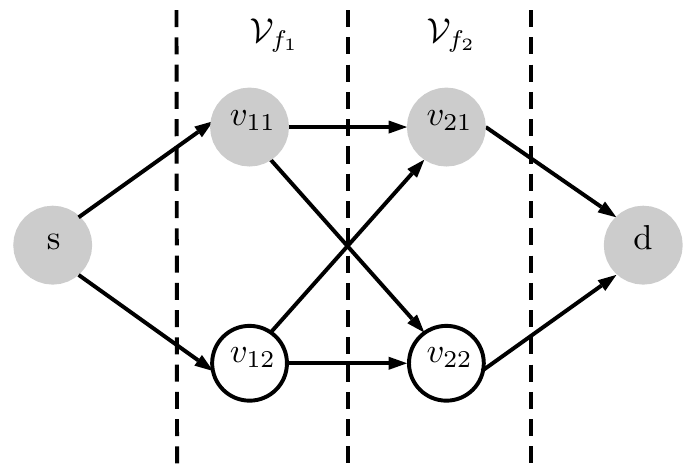}
\caption{The node $s$ transmits  $\bold{x}$  through the directed link to the node $v_{11}$. The node $v_{11}$ is  delivered $\bold{x}$  after $L_0$ time slots, and then computes $f_1(\bold{x}).$ Similarly, $v_{11}$ transmits  $f_1(\bold{x})$ to the node $v_{21}$ using the directed link between them. The node $v_{21}$ receives  $f_1(\bold{x})$ after $L_1$ time slots and then computes $f_2(f_1(\bold{x}))$ and transmits it to the destination $d$. Hence, using one-element sets $\mathcal{S}_1=\{v_{11}\}$ and $\mathcal{S}_2=\{v_{21}\}$, one can achieve the normalized delay of  $D/L_0=3.$ Note that the capacity of each link is set to  one.}

 \end{subfigure}

\hfill

\begin{subfigure}[b]{1\textwidth}

\centering

\includegraphics[scale = 0.9]{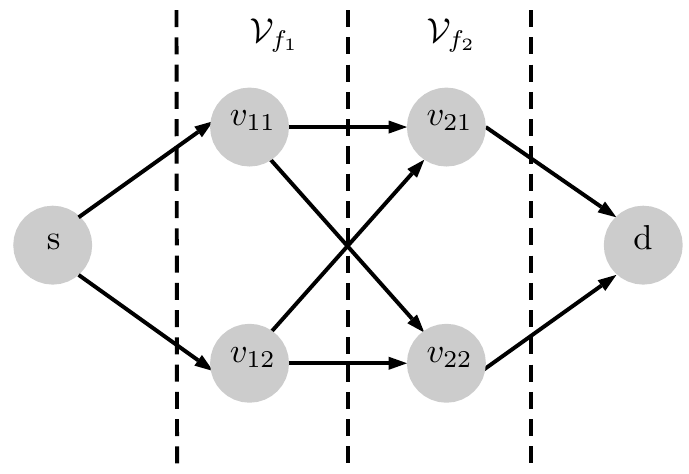}

\caption{ The proposed scheme for  function computation in network. Assume that in the first round, the source node $s$ sends $\bold{x}$ to the  nodes $v_{11}$ and $v_{12}$ simultaneously. This means that after $L_0$ time slots, two nodes $v_{11}$ and $v_{12}$ are delivered the data $\bold{x}$ and can compute $f_1(\bold{x})$.  Now assume that $f_1(\bold{x})=(\bold{y}_1,\bold{y}_2)$, where $\bold{y}_1$  contains the first $L_1/2$ bits of $f_1(\bold{x})$ and $\bold{y}_2$ contains the second $L_1/2$ bits of  $f_1(\bold{x})$. In the second round of the data transmission, the node $v_{11}$ transmits $\bold{y}_1$ through the directed links to $v_{21}$ and $v_{22}$. Simultaneously, the node $v_{12}$ transmits $\bold{y}_2$ through the directed links to the nodes $v_{21}$ and $v_{22}.$ Using this approach, after $L_1/2$ time slots, the nodes $v_{21}$ and $v_{22}$ are delivered  $f_1(\bold{x})$ entirely. This means that by this approach, the second round of data transmission takes only $L_1/2$ time slots. Similarly, one can see that the third round can be performed after $L_2/2$ time slots. This shows that the total delay is $D=L_0+L_1/2+L_2/2=2L_0$, and hence, the normalized delay of $D/L_0 = 2$ is achievable.}

\end{subfigure}
\caption{The network of Example 1 and the proposed method for function computation. }

\end{figure}

Consider the network in Figure 2 (a) and the request tuple  $\mathscr{R}=(s,d,\bold{x},(f_1,f_2))$. We notice that the data  $\bold{x}$ is available in the source node $s$. There are two functions $f_1$ and $f_2$. The destination node $d$ is  interested  in receiving the computed result of $f_2(f_1(\bold{x}))$. We assume that the functions $f_1$ and  $f_2$ can be computed in the nodes $\mathcal{V}_{f_1}=\{v_{11},v_{12}\}$ and $\mathcal{V}_{f_2}=\{v_{21},v_{22}\}$, respectively.
 The capacity of each link is  equal to one, i.e., $w(e)=1$ for any link $e$. Also, we assume that  $L_0=L_1=L_2$. 
 One can see that using  no-redundancy  scheme for   function computation,  the normalized delay of $D/L_0 = 3$ is achievable (see Figure 2 (b)).
  Furthermore, it is obvious that using  only one-element sets for $\mathcal{S}_1$ and $\mathcal{S}_2$, we cannot have an  end-to-end transmission delay  which is  smaller than $3L_0$. On the other side, if we redundantly compute the functions $f_1$ and $f_2$,  we can achieve a more small delay. In particular, consider $\mathcal{S}_1=\{v_{11},v_{12}\}$ and $\mathcal{S}_2=\{v_{21},v_{22}\}$, 
  which means that the functions $f_1$ and    $f_2$ are computed in  two nodes simultaneously and in parallel. 
In this case, we can prove that the total delay decreases  (see Figure 2 (c)). It is shown in Figure 2 (c)  that using this method,  the normalized delay   of $D/L_0 = 2$ is achievable. This means that in this example, redundantly function computation outperforms the traditional no-redundancy  method, by a factor of $3/2$.

\begin{remark} \normalfont
Note that in Example 1 and also in the next example, there is a bipartite graph  between  nodes computing $f_k$ and   nodes computing $f_{k+1}$. 
However, in the rest of this paper, we consider a general scheme where we assume existence of an arbitrary network between nodes in layer $k$ and layer $k+1$
which may not be necessarily one-hop
 (see the clouds in Figure 1). 
In other words, we do not restrict ourselves  to a specific network topology.
\end{remark}

\end{example}

\begin{example} \normalfont
Consider the network in Figure 3 (a) and the request tuple  $\mathscr{R}=(s,d,\bold{x},(f_1,f_2,\ldots,f_K))$.  Similar to Example 1, there is a data packet $\bold{x}$,  available in the source node $s$. The destination node $d$ is interested in receiving the computed result of $f_K(f_{K-1}(\ldots f_1(\bold{x})\ldots)))$.
 Assume that any function $f_k$ can be computed in the nodes $\mathcal{V}_{f_k}=\{v_{k1},v_{k2},\ldots, v_{kN}\}.$ In addition, let $w(e) = 1$ for any link $e$. In this example,  assume that  $L_0=L_1=\ldots=L_K.$ Similar to Example 1,    $D/L_0 = K+1$ is the best achievable normalized delay using  no-redundancy approaches for the function computation (see Figure 3 (b)).  However, one can compute the functions redundantly in the nodes. In particular, assume that to respond the request, any function $f_k$ is computed in  all of the nodes which are capable to compute it, i.e.,  $\mathcal{S}_k=\mathcal{V}_{f_{k}}$. Via this approach, the network delay   decreases. From Figure 3 (c),  the normalized delay of $D/L_0=(1+K/N)$ is achievable by this method.  This shows that using the proposed approach for function computation, the end-to-end delay decreases by a factor of $\frac{1+K}{1+K/N}$. For large values of $K$,   we have  $\frac{1+K}{1+K/N} \approx N$. This means that the  gain of the redundantly function computing  can increase as the size of the network increases.

\begin{figure}
\centering

 \begin{subfigure}	[b]{1\textwidth}
 \centering

\includegraphics[scale = 0.47]{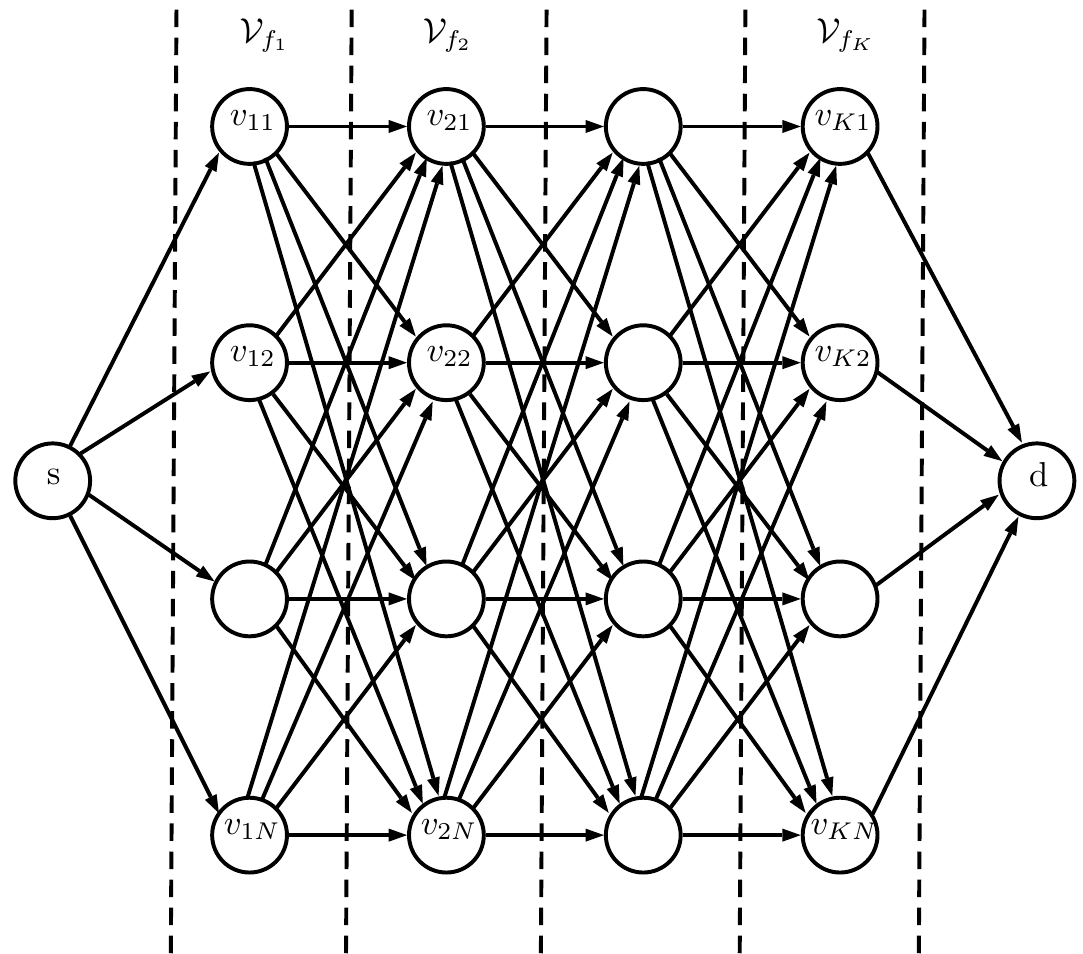}
\caption{The network of Example 2.  For any $n,n' \in [N]$ and any $k \in [K-1]$,   $(v_{kn},v_{(k+1)n'})$ is a directed link in the network. Also, for any $n \in [N]$,  the the source node $s$ is connected to the node $v_{1n}$ and the  the node $v_{Kn}$ is connected to the destination node $d$.  It is assumed that  the capacity of any link is set to  one. }
\end{subfigure}

\hfill

 \begin{subfigure}	[b]{1\textwidth}
 \centering

\includegraphics[scale = 0.47]{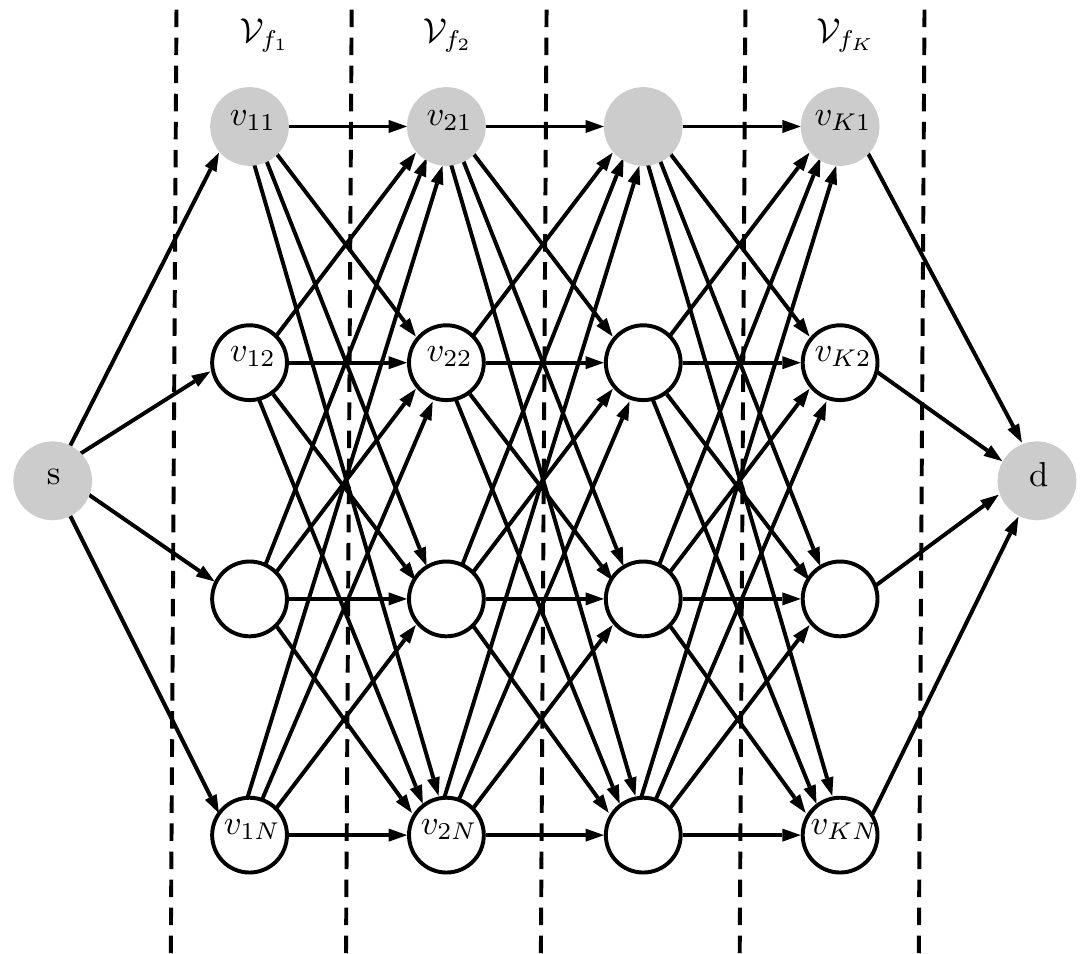}
\caption{The source node $s$ transmits $\bold{x}$ to the node $v_{11}$  in the first round of data transmission. This round has the delay of $L_0$ time slots. In the second round, node $v_{11}$ computes $f_1(\bold{x})$ and sends it to node $v_{21}$ in $L_1$ time slots. Similarly, the computation and transmission  tasks sequentially continue, and finally, the node $d$ is delivered the computed result of $f_K(f_{K-1}(\ldots f_1(\bold{x})\ldots)))$ after $L_0+L_1+\ldots+L_K = L_0(K+1)$ time slots.}
 
\end{subfigure} 

\hfill
 
\begin{subfigure}	[b]{1\textwidth}
\centering
 
\includegraphics[scale = 0.47]{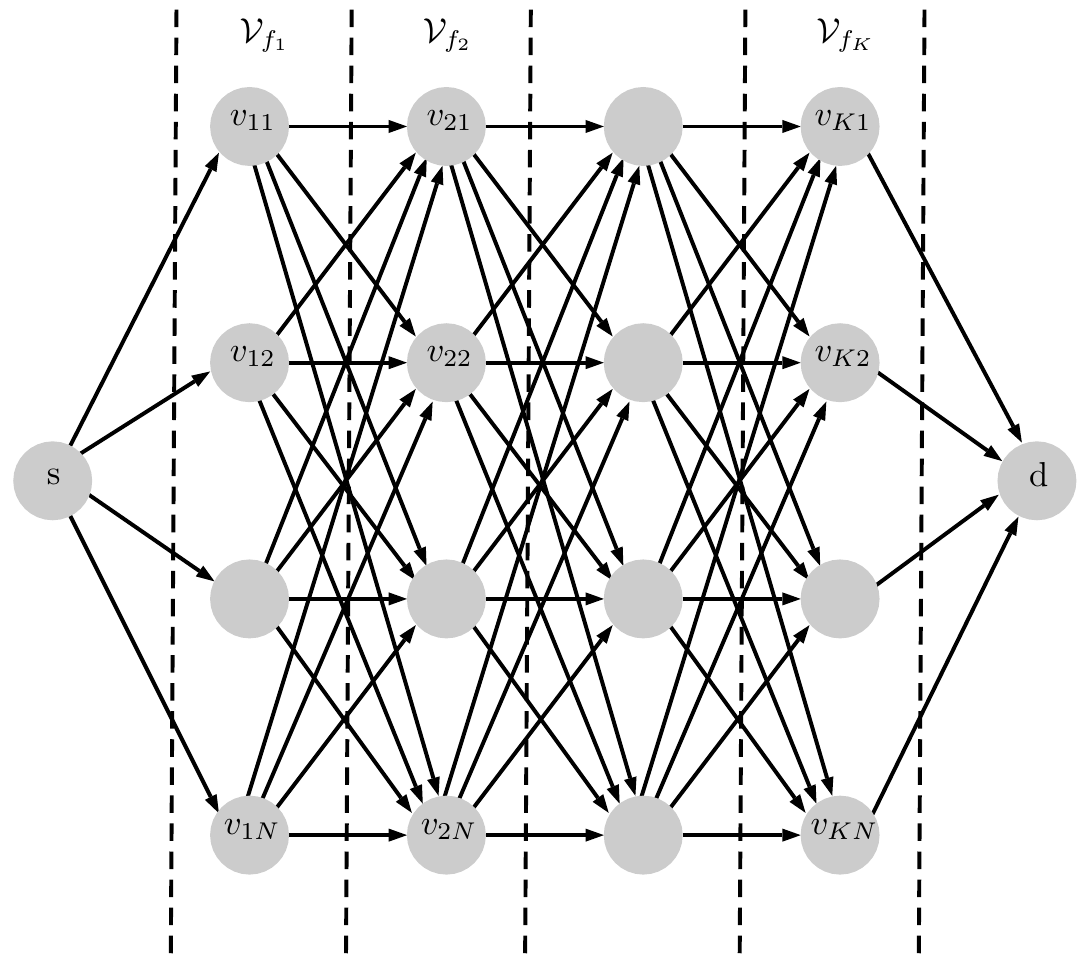}
\caption{The node $s$ transmits the data packet $\bold{x}$  to all of the nodes that can compute $f_1$ in the first round. This round lasts $L_0$ time slots. Next in the second round,  $f_1(\bold{x})$ is available in  all of the nodes $\{v_{11},v_{12},\ldots, v_{1N}\}$. Now divide $f_1(\bold{x})$ into $N$ equal fragments. Assume that each node $v_{1n}$ transmits the $n^{\text{th}}$ fragment of $f_1(\bold{x})$ to all of the nodes that have a directed link from $v_{1n}$. Using this approach  after $L_1/N$ time slots all of the nodes of $\mathcal{S}_2$  are delivered the content $f_1(\bold{x})$ entirely. This means that the second round of transmission lasts $L_1/N$ time slots. Using similar approach for the other rounds, one can see that the end-to-end delay of $L_0+L_1/N+\ldots + L_K/N=L_0(1+K/N)$ is achievable. 
}
 
\end{subfigure}
\caption{The network of Example 2 and the proposed scheme for function computation.}
 
\end{figure}

\end{example}

\subsection{Main Results}

In the previous subsection,  the  main idea of redundant function computation has been discussed. For the statement of the main result for  general networks, first we need a definition.

 \begin{definition} (Cut)
 For a directed weighted graph $\mathcal{G} = (\mathcal{V}, \mathcal{E},w(.))$, and any $\mathcal{S}\subseteq \mathcal{V}$, $\cut(S)$ is defined as 
 $$\cut(\mathcal{S})=\sum_{
\substack{
            e=(u,v)\in \mathcal{E}\\
           u \in \mathcal{S};
            v \notin \mathcal{S}
           }
 }w(e).$$
 \end{definition}

In the next theorem, we state a tight lower bound on $D_k$.

\begin{theorem}
For any $k \in [K+1]$, any  $\mathcal{S}_{k-1}\subseteq \mathcal{V}_{f_{k-1}}$, and any $\mathcal{S}_{k}\subseteq \mathcal{V}_{f_{k}}$, the delay of the $k^{\text{th}}$ transmission round can be lower bounded as follows
\begin{align}
D_k \ge \frac{L_{k-1}}{\min\limits_{
\substack{
            \mathcal{S}_{k-1} \subseteq \mathcal{S} \subseteq \mathcal{V}\\
             \mathcal{S}_{k} \not \subseteq \mathcal{S}}
} \cut(\mathcal{S})}.
\end{align}
Moreover, the lower bound is tight, i.e., there exists a transmission scheme $\mathfrak{D}_k$ that achieves the lower bound.

\end{theorem}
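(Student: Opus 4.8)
The plan is to recognize the $k^{\text{th}}$ transmission round as a \emph{single-source multicast} instance and to invoke the network-coding max-flow/min-cut theorem \cite{code7}. The key structural observation is that every node of $\mathcal{S}_{k-1}$ starts with the \emph{same} $L_{k-1}$-bit content and every node of $\mathcal{S}_{k}$ must recover it, so the multiple sources can be merged into a single super-source without loss. Concretely, I would build an auxiliary graph $\mathcal{G}^{\ast}$ from $\mathcal{G}$ by adding one new vertex $s^{\ast}$ together with a directed link $(s^{\ast},u)$ of capacity $w(s^{\ast},u)=\infty$ for every $u\in\mathcal{S}_{k-1}$. In $\mathcal{G}^{\ast}$ the round-$k$ problem becomes the clean statement: multicast the content held by $s^{\ast}$ to all sinks $t\in\mathcal{S}_{k}$.

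For the converse (the lower bound), I would use a cut-set counting argument valid for \emph{any} delivery scheme $\mathfrak{D}_k$, coded or not. Fix the minimizing set $\mathcal{S}^{\ast}$ from the statement, so that $\mathcal{S}_{k-1}\subseteq\mathcal{S}^{\ast}$ and $\mathcal{S}_{k}\not\subseteq\mathcal{S}^{\ast}$, and pick a sink $t\in\mathcal{S}_{k}\setminus\mathcal{S}^{\ast}$. Since the entire content resides initially inside $\mathcal{S}^{\ast}$ (because $\mathcal{S}_{k-1}\subseteq\mathcal{S}^{\ast}$), every bit of information that ever reaches the side $\mathcal{V}\setminus\mathcal{S}^{\ast}$ — and in particular everything $t$ learns — must traverse a link crossing the cut. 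Over $D_k$ time slots each crossing link $e$ carries at most $\lfloor D_k w(e)\rfloor\le D_k w(e)$ bits, so $t$ receives at most $D_k\cut(\mathcal{S}^{\ast})$ bits in total. As $t$ must reconstruct the arbitrary $L_{k-1}$-bit message exactly, we need $D_k\cut(\mathcal{S}^{\ast})\ge L_{k-1}$, which rearranges to the claimed inequality; choosing the minimizing cut is precisely what makes this the tightest bound.

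For achievability, I would translate the denominator into a multicast capacity. A subset $\mathcal{S}$ of the vertices of $\mathcal{G}^{\ast}$ containing $s^{\ast}$ has finite cut only if it does not sever any infinite link, i.e.\ only if $\mathcal{S}_{k-1}\subseteq\mathcal{S}$, and then its cut value equals $\cut(\mathcal{S})$ computed in the original $\mathcal{G}$. Hence for each sink $t$ the min-cut from $s^{\ast}$ to $t$ equals $\min_{\mathcal{S}_{k-1}\subseteq\mathcal{S},\,t\notin\mathcal{S}}\cut(\mathcal{S})$, and minimizing over the sinks — equivalently requiring $\mathcal{S}_k\not\subseteq\mathcal{S}$ — recovers exactly the denominator $C:=\min_{\mathcal{S}_{k-1}\subseteq\mathcal{S}\subseteq\mathcal{V},\,\mathcal{S}_k\not\subseteq\mathcal{S}}\cut(\mathcal{S})$. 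By the single-source multicast theorem \cite{code7}, the multicast capacity of $\mathcal{G}^{\ast}$ equals this common minimum min-cut $C$ and is attained by a linear network code, so delivering the $L_{k-1}$-bit content costs $D_k=L_{k-1}/C$ time slots, matching the lower bound.

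The main obstacle is the achievability rather than the converse: one must make the capacity identity rigorous at finite block length. The multicast theorem delivers rate $C$ only asymptotically, over a sufficiently large field, so the clean equality $D_k=L_{k-1}/C$ leans on the paper's standing assumption that each $L_{k-1}$ is large enough (the footnote following the problem statement), which simultaneously absorbs the field-size overhead and the $\lfloor\cdot\rfloor$ rounding on link capacities. I would also state explicitly that the infinite-capacity links out of $s^{\ast}$ never appear in any finite cut, so the auxiliary construction introduces no spurious small cuts and the correspondence between finite cuts of $\mathcal{G}^{\ast}$ and admissible sets $\mathcal{S}$ in the theorem is exact.
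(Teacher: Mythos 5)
Your proposal is correct and follows essentially the same route as the paper: both construct an auxiliary super-source joined to $\mathcal{S}_{k-1}$ by infinite-capacity links, reduce the round to a single-source multicast, and invoke the max-flow/min-cut multicast theorem of \cite{code7} together with linear codes for achievability, using the large-$L_{k-1}$ assumption. Your explicit cut-counting converse and your observation that infinite links never appear in a finite cut are just slightly more detailed renderings of steps the paper states more briefly.
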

\begin{proof}
See appendix A.
\end{proof}
\begin{remark} \normalfont
To prove Theorem 1, we utilize the results from network coding,  specifically the capacity of the single multicast problem \cite{code7}. We show that the delivery problem in each round can be reduced to a single-multicast problem where the capacity is equal to the min-cut, and can be achieved by linear codes.

\end{remark}

\begin{remark} \normalfont
The above result shows that routing is not generally optimum. In particular, for delivery  network coding should be applied. 

\end{remark}

\begin{corollary} \normalfont
From Theorem 1,  for a given sequence of sets $(\mathcal{S}_1,\mathcal{S}_2,\ldots,\mathcal{S}_K)$, the end-to-end delay is lower bounded by 
\begin{align}
D \ge \sum_{k=1}^{K+1} \frac{L_{k-1}}{\min\limits_{
\substack{
            \mathcal{S}_{k-1} \subseteq \mathcal{S} \subseteq \mathcal{V}\\
             \mathcal{S}_{k} \not \subseteq \mathcal{S}}
}  \cut(\mathcal{S})}
\label{01}.
\end{align}
In addition, the lower bound is achievable.
\end{corollary}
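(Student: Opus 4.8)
The plan is to obtain the Corollary directly from Theorem 1 in two parts: the lower bound by a term-by-term summation, and the achievability by composing the per-round optimal delivery schemes. First I would establish the lower bound. By the definition of the end-to-end delay we have $D = \sum_{k=1}^{K+1} D_k$, where $D_k$ is the delay of the $k^{\text{th}}$ transmission round and we adopt the conventions $\mathcal{S}_0 = \{s\}$ and $\mathcal{S}_{K+1} = \{d\}$. For the fixed sequence $(\mathcal{S}_1,\ldots,\mathcal{S}_K)$, Theorem 1 lower bounds each $D_k$ individually by $L_{k-1} / \min_{\mathcal{S}_{k-1}\subseteq \mathcal{S}\subseteq\mathcal{V},\, \mathcal{S}_k\not\subseteq \mathcal{S}} \cut(\mathcal{S})$. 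Summing these $K+1$ inequalities yields exactly \eqref{01}. This step is immediate, since the bound of Theorem 1 holds for every round irrespective of the schemes chosen in the other rounds.

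Second, for achievability I would invoke the tightness clause of Theorem 1: for each $k \in [K+1]$ there exists a delivery scheme $\mathfrak{D}_k$ whose round delay $D_k$ equals the corresponding per-round lower bound. I would then argue that the full responding scheme obtained by executing $\mathfrak{D}_1, \mathfrak{D}_2, \ldots, \mathfrak{D}_{K+1}$ in sequence attains $D = \sum_{k=1}^{K+1} D_k$ equal to the right-hand side of \eqref{01}. The observation that makes this composition valid is structural: in the multistage model of Figure 1 the rounds are temporally sequential and each stage uses its own copy of the infrastructure graph $\mathcal{G}$, so the choice of $\mathfrak{D}_k$ in one round imposes no constraint on, and consumes no resource shared with, any other round. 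Moreover, by the sequential nature of the computation the common content $f_{k-1}(f_{k-2}(\ldots f_1(\bold{x})\ldots))$ is fully available at every node of $\mathcal{S}_{k-1}$ before round $k$ begins, so each round starts from precisely the initial condition that Theorem 1 presumes.

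The main point requiring care is exactly this independence across rounds: one must verify that optimizing each round separately creates no conflict when the rounds are concatenated, i.e., that the individual per-round optima are \emph{simultaneously} attainable rather than merely attainable one at a time. I expect this to follow cleanly from the sequential, resource-disjoint structure of the stages, which leaves no genuine cross-round coupling, so that the separable lower bound of \eqref{01} is met with equality by the concatenated scheme. No further optimization over the sets is performed here, since the statement fixes $(\mathcal{S}_1,\ldots,\mathcal{S}_K)$; minimizing the bound over the choice of these sets is the subject of the later integer-programming formulation.
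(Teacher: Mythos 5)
Your proposal is correct and follows essentially the same route as the paper: the paper's own justification (given implicitly in the corollary statement and spelled out in Appendix B for Theorem 2) is exactly to sum the per-round bounds of Theorem 1 and to achieve the sum by "using the achievable scheme of Theorem 1 in all rounds of data transmission." Your additional remark that the rounds are resource-disjoint because each stage has its own copy of $\mathcal{G}$ is the correct reason the concatenation works, and matches the paper's multistage model.
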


\begin{theorem}
The optimum end-to-end delay of the problem, denoted by $D^*$,  is equal to
\begin{align}
D^* = \min_{\forall k\in[K]: \mathcal{S}_k \subseteq \mathcal{V}_k}  \Big \{  \sum_{k'=1}^{K+1} \frac{L_{k'-1}}{\min\limits_{
\substack{
            \mathcal{S}_{k'-1} \subseteq \mathcal{S} \subseteq \mathcal{V}\\
             \mathcal{S}_{k'} \not \subseteq \mathcal{S}}
} \cut(\mathcal{S})} \Big \} \label{13}.
\end{align}
In addition, the optimum sets $(\mathcal{S}^*_1,\mathcal{S}^*_2,\ldots,\mathcal{S}^*_K)$ are arguments of the above optimization.

\begin{proof}
See appendix B.
\end{proof}

\begin{remark} \normalfont
To achieve $D^*$, we let $\mathcal{S}_1=\mathcal{S}^*_1,\mathcal{S}_2=\mathcal{S}^*_2,\ldots,\mathcal{S}_K=\mathcal{S}^*_K$ and we develop linear codes in delivery rounds.
\end{remark}

\end{theorem}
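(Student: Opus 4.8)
The plan is to obtain Theorem 2 as a joint optimization over both the computing sets $\{\mathcal{S}_k\}_{k\in[K]}$ and the delivery schemes $\{\mathfrak{D}_k\}_{k\in[K+1]}$, using Theorem 1 (equivalently Corollary 1) as the per-round building block. The central structural observation is that the end-to-end delay decomposes additively as $D=\sum_{k=1}^{K+1}D_k$, and that—because of the multistage model in which each stage owns a private copy of the infrastructure graph $\mathcal{G}$ (the clouds of Figure 1)—the round-$k$ delivery scheme $\mathfrak{D}_k$ consumes no resources shared with any other round. Hence, once the sets are fixed, the delivery schemes of distinct rounds may be optimized completely independently, which lets me interchange the inner minimization with the sum:
\[
D^\ast=\min_{\{\mathcal{S}_k\}}\ \min_{\{\mathfrak{D}_k\}}\ \sum_{k=1}^{K+1}D_k=\min_{\{\mathcal{S}_k\}}\ \sum_{k=1}^{K+1}\ \min_{\mathfrak{D}_k}D_k.
\]

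For the converse direction, I would fix an arbitrary responding scheme with some sets $(\mathcal{S}_1,\ldots,\mathcal{S}_K)$ and some delivery schemes $(\mathfrak{D}_1,\ldots,\mathfrak{D}_{K+1})$. Applying the lower bound of Theorem 1 to each round $k\in[K+1]$ bounds $D_k$ below by $L_{k-1}/\min_{\mathcal{S}_{k-1}\subseteq\mathcal{S}\subseteq\mathcal{V},\,\mathcal{S}_k\not\subseteq\mathcal{S}}\cut(\mathcal{S})$; summing over $k$ reproduces exactly the inner sum of \eqref{13} evaluated at these particular sets, i.e.\ the bound \eqref{01} of Corollary 1. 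Since that quantity is at least its minimum over all admissible set sequences $\emptyset\neq\mathcal{S}_k\subseteq\mathcal{V}_{f_k}$, every responding scheme satisfies $D\ge$ the right-hand side of \eqref{13}, and therefore $D^\ast\ge$ the right-hand side of \eqref{13}.

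For achievability, I would let $(\mathcal{S}^\ast_1,\ldots,\mathcal{S}^\ast_K)$ be a minimizer of the outer optimization in \eqref{13} and instantiate these as the computing sets. By the achievability half of Theorem 1, in the private copy of $\mathcal{G}$ assigned to round $k$ there is a linear network code realizing the single-multicast capacity, so round $k$ completes in exactly $L_{k-1}/\min_{\mathcal{S}^\ast_{k-1}\subseteq\mathcal{S}\subseteq\mathcal{V},\,\mathcal{S}^\ast_k\not\subseteq\mathcal{S}}\cut(\mathcal{S})$ time slots. Because the rounds run sequentially on disjoint graph copies, these per-round optima are attained simultaneously, giving a valid responding scheme whose total delay equals the right-hand side of \eqref{13}. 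Hence $D^\ast\le$ the right-hand side of \eqref{13}, and combining the two inequalities yields equality, with the minimizing sets being the claimed $(\mathcal{S}^\ast_1,\ldots,\mathcal{S}^\ast_K)$.

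The main obstacle—and the step deserving the most care—is the achievability claim that all rounds meet their min-cut-limited delays at once. This rests on two facts I would invoke explicitly: first, the multistage structure guarantees no inter-round contention for links, so simultaneous round-optimality is not blocked by any coupling constraint; and second, each $L_k$ is assumed large enough for a capacity-achieving network code to exist in the corresponding copy of $\mathcal{G}$, as required by the single-source multicast theorem. Granting these, the remainder is the clean interchange of the delivery-scheme minimization (resolved round-wise by Theorem 1) with the outer minimization over the sets, as displayed above.
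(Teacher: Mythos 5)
Your proposal is correct and follows essentially the same route as the paper's Appendix B: apply the per-round lower bound of Theorem 1 and sum for the converse, then achieve it round-by-round via the capacity-achieving linear codes on the disjoint stage copies of $\mathcal{G}$, reducing the problem to the outer minimization over the sets. Your explicit justification of the min--sum interchange and of simultaneous round-optimality only spells out what the paper leaves implicit.
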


\begin{remark} \normalfont
Via Theorem 2, the minimum achievable end-to-end delay of the problem for a general network is characterized. However,  to achieve this minimum delay, a complicated optimization problem must be solved to obtain the sets $\{\mathcal{S}_k\}_{k \in [K]}$.
Solving this optimization problem  is not straight-forward. In the next section, we  propose some approximation algorithms to solve this problem.
\end{remark}

We note that the  size of the set $\mathcal{S}_k$  corresponds to the processing cost of  $f_k$. Hence, it is rational to have  constraint on the cardinality of each set $\mathcal{S}_k$. The following corollary states the optimization problem with the processing power constraint on the sets. 
This means that we restrict the optimization to the cases that each set  $\mathcal{S}_k$ has a bounded size.

\begin{corollary}  \normalfont
The problem of minimization of the end-to-end delay, subject to limit on processing power in each round is formulated as 
\begin{align}
\min_{
\substack{
                         \forall k\in[K]: \mathcal{S}_k \subseteq \mathcal{V}_k \\
             |\mathcal{S}_k | \le \alpha_k}
}  \Big \{  \sum_{k'=1}^{K+1} \frac{L_{k'-1}}{\min\limits_{
\substack{
            \mathcal{S}_{k'-1} \subseteq \mathcal{S} \subseteq \mathcal{V}\\
             \mathcal{S}_{k'} \not \subseteq \mathcal{S}}
} \cut(\mathcal{S})} \Big \},
\end{align}
where $\alpha_k$ denotes the maximum processing cost of $f_k$.
\end{corollary}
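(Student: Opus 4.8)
The plan is to derive the corollary directly from Theorem 2 by restricting the domain of the outer minimization, so that essentially no new analytic work is required beyond identifying the correct feasible region. The starting observation is the one noted just before the statement: since $f_k$ is computed independently at every node of $\mathcal{S}_k$, the processing cost incurred for $f_k$ is proportional to the number of such nodes, i.e. to $|\mathcal{S}_k|$. Consequently, imposing a budget $\alpha_k$ on the processing power of round $k$ is equivalent to the cardinality constraint $|\mathcal{S}_k|\le \alpha_k$, and the only effect of this budget on the delay optimization is to shrink the set of admissible sequences $(\mathcal{S}_1,\dots,\mathcal{S}_K)$ appearing in Theorem 2.

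First I would record the converse. Fix any responding scheme that respects the budget; its computing sets satisfy $\mathcal{S}_k\subseteq \mathcal{V}_k$ and $|\mathcal{S}_k|\le \alpha_k$ for every $k$. Corollary 1 --- which is itself an immediate consequence of the per-round bound in Theorem 1 --- gives $D \ge \sum_{k'=1}^{K+1} L_{k'-1}/\min_{\mathcal{S}}\cut(\mathcal{S})$ for these particular sets, where the inner minimum is over $\mathcal{S}_{k'-1}\subseteq \mathcal{S}\subseteq \mathcal{V}$ with $\mathcal{S}_{k'}\not\subseteq \mathcal{S}$. Since the chosen sets lie in the constrained feasible region, this value is in turn at least the minimum of the same expression taken over all admissible sequences, which is precisely the right-hand side of the corollary.

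Next I would establish achievability. Let $(\mathcal{S}^{*}_1,\dots,\mathcal{S}^{*}_K)$ attain the minimum in the constrained problem; by construction these sets obey both $\mathcal{S}^{*}_k\subseteq \mathcal{V}_k$ and $|\mathcal{S}^{*}_k|\le \alpha_k$, so the associated scheme respects the processing budget. Applying the achievability half of Theorem 1 round by round, and summing as in Corollary 1, produces a delivery strategy whose end-to-end delay equals the inner sum evaluated at these sets, hence equals the claimed minimum. Matching converse and achievability then identifies the constrained optimum with the displayed expression.

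The only point that requires care --- and the step I expect to be the main, if mild, obstacle --- is to confirm that the per-round results of Theorem 1 hold uniformly in the cardinalities of $\mathcal{S}_{k-1}$ and $\mathcal{S}_k$, so that nothing in the achievable scheme secretly relies on the sets being large. This is guaranteed by the statement of Theorem 1, which asserts both the bound and a matching scheme for arbitrary nonempty $\mathcal{S}_{k-1}\subseteq \mathcal{V}_{f_{k-1}}$ and $\mathcal{S}_k\subseteq \mathcal{V}_{f_k}$; in particular the underlying single-source multicast code construction is insensitive to whether the source and terminal sets meet a cardinality cap. Thus the processing-power restriction merely prunes the outer minimization while leaving the inner per-round min-cut characterization untouched, which completes the reduction.
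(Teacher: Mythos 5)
Your proposal is correct and follows essentially the same (implicit) route as the paper: the corollary is obtained directly from Theorems 1 and 2 by restricting the outer minimization to sequences with $|\mathcal{S}_k|\le\alpha_k$, with the converse and achievability both inherited unchanged from the per-round min-cut characterization. The paper states this corollary without a separate proof precisely because, as you note, the cardinality budget only prunes the feasible region and leaves the inner characterization untouched.
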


\begin{remark}\normalfont
Note that the above formula for the end-to-end delay shows that by increasing the processing costs, i.e., $\alpha_k$'s, the end-to-end delay can decrease, because the set of feasible solutions of the optimization enlarges.
This shows that there is a trade-off between end-to-end delay and processing cost of each request.
\end{remark}

For the rest of this section, we  explain the results of redundantly function computation in a special case of complete graphs. 
The main goal of this example is to demonstrate that there are also fully connected networks (not necessarily one-hop like  Example 2) that the gain of the proposed scheme, i.e., redundantly computation and network coding in delivery,  scales with the size of the network.  Also, we attend to  demonstrate the  achieved trade-offs between computation and delay via this example.

\subsection{Another Example}
Consider the network $\mathcal{G} = (\mathcal{V},\mathcal{E},w(.))$ where  $\mathcal{V}$ is a finite set of nodes  and $\mathcal{E}$ contains all of the distinct pairs of the nodes, i.e., the graph is fully connected. In this network, assume that there exists a request of  computation as $\mathscr{R}=(s,d,\bold{x},(f_1,f_2,\ldots,f_K))$, and  for any $k\in [K]$, $\mathcal{V}_{f_{k}} = \{v_{k1},v_{k2},\ldots,v_{kN}\}$ is a set containing $N$ nodes for a positive integer $N$. This means that the network consists of $K\times N$ function computing nodes, in addition of other nodes as relays. Suppose that we have at least $N+1$ relays. 
Also, two nodes $s$ and $d$ 
cannot compute any function.

In this case, for any $e = (u,v) \in \mathcal{E}$ we define
$$
w(e) = 
\begin{cases}
\epsilon ~~~ \text{if } ~u \in \mathcal{V}_{f_k} ~\text{for some} ~k \in [K],~\text{or}~u=s,\\
1~~~\text{otherwise},
\end{cases}
$$
where $\epsilon$ is a small enough positive\footnote{
The need for  $\epsilon$ to be small enough is due  to 
such requirement for the proof of Lemma 1. 
}.
 In this fully connected network,  
let us assume that there is an upper bound on processing cost  as $|\mathcal{S}_k| \le \alpha$ where $\alpha \le N$. 
 Now we apply  results of Theorem 2 and Corollary 2   to this network.

We want to  compute the lower bound in Corollary 2. Note that we have
\begin{align}
\cut(\mathcal{S}_k) =  \epsilon \times |\mathcal{S}_k|\times (|\mathcal{V}|-|\mathcal{S}_k|),
\end{align}
for any $k \in [K]\cup \{0\}$. 
To derive $D^*$ from Theorem 2, in this setup, we introduce the following lemma. 

\begin{lemma}
Assuming small enough  $\epsilon$,  for the mentioned network we have
\begin{align}
\min_{
\substack{
            \mathcal{S}_{k-1} \subseteq \mathcal{S} \subseteq \mathcal{V}\\
             \mathcal{S}_{k} \not \subseteq \mathcal{S}}
}\cut(\mathcal{S}) = \epsilon \times |\mathcal{S}_{k-1}|\times (|\mathcal{V}|-|\mathcal{S}_{k-1}|),
\end{align}
for any $k \in [K+1]$.
\end{lemma}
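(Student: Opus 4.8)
The plan is to exploit the very special structure of the cut function on a complete graph, where every outgoing edge of a node carries a weight depending only on that node. First I would record the basic identity: since each $u\in\mathcal{S}$ contributes exactly $|\mathcal{V}|-|\mathcal{S}|$ outgoing edges (one to every node outside $\mathcal{S}$), each of weight $w(u)$, we have
$$\cut(\mathcal{S})=(|\mathcal{V}|-|\mathcal{S}|)\sum_{u\in\mathcal{S}}w(u).$$
Writing $W(\mathcal{S}):=\sum_{u\in\mathcal{S}}w(u)$, every node of $\mathcal{S}$ is either \emph{light} (weight $\epsilon$: a function node or the source $s$) or \emph{heavy} (weight $1$: a relay or the destination $d$), so $W(\mathcal{S})$ equals $\epsilon$ times the number of light nodes of $\mathcal{S}$ plus the number of its heavy nodes. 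This reduces the whole statement to minimizing the product $(|\mathcal{V}|-|\mathcal{S}|)W(\mathcal{S})$ over the admissible family $\{\mathcal{S}:\mathcal{S}_{k-1}\subseteq\mathcal{S},\ \mathcal{S}_k\not\subseteq\mathcal{S}\}$.

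For the upper bound I would simply exhibit the candidate $\mathcal{S}=\mathcal{S}_{k-1}$. It is admissible: it contains $\mathcal{S}_{k-1}$ trivially, and because the VNF placements are disjoint and neither $s$ nor $d$ computes a function, $\mathcal{S}_k$ (nonempty and supported on nodes disjoint from $\mathcal{S}_{k-1}$) cannot be a subset of $\mathcal{S}_{k-1}$, so $\mathcal{S}_k\not\subseteq\mathcal{S}_{k-1}$. Moreover $\mathcal{S}_{k-1}$ consists entirely of light nodes, so the identity gives $\cut(\mathcal{S}_{k-1})=\epsilon\,|\mathcal{S}_{k-1}|(|\mathcal{V}|-|\mathcal{S}_{k-1}|)$, matching the claimed value.

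The crux is the matching lower bound: I must show $\cut(\mathcal{S})\ge \epsilon\,p_0(|\mathcal{V}|-p_0)$ for every admissible $\mathcal{S}$, where $p_0:=|\mathcal{S}_{k-1}|$. I would split into two cases according to whether $\mathcal{S}$ contains a heavy node. If $\mathcal{S}$ is light-only, then $\cut(\mathcal{S})=\epsilon\,h(|\mathcal{S}|)$ with $h(x):=x(|\mathcal{V}|-x)$; since $\mathcal{S}_{k-1}\subseteq\mathcal{S}$ forces $|\mathcal{S}|\ge p_0$, and since the total number of light nodes is $KN+1$, which (using the assumed $\ge N+1$ relays together with $p_0\le N$) is at most $|\mathcal{V}|-p_0$, the argument $|\mathcal{S}|$ lies in $[p_0,|\mathcal{V}|-p_0]$ — exactly the interval on which the downward parabola $h$ stays above $h(p_0)=h(|\mathcal{V}|-p_0)$ — so $\cut(\mathcal{S})\ge\epsilon\,h(p_0)$. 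If instead $\mathcal{S}$ contains a heavy node, then $W(\mathcal{S})\ge1$, while admissibility ($\mathcal{S}_k\not\subseteq\mathcal{S}$) forces $\mathcal{S}\ne\mathcal{V}$ and hence $|\mathcal{V}|-|\mathcal{S}|\ge1$, giving $\cut(\mathcal{S})\ge1$.

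This last case is where the smallness of $\epsilon$ enters, and it is the only real obstacle: I must guarantee that the cheap all-light cut genuinely beats any cut that pays for even one heavy edge, i.e. $\epsilon\,p_0(|\mathcal{V}|-p_0)<1$. Since $p_0\le N$ and $|\mathcal{V}|$ are fixed once the network is given, choosing $\epsilon<1/\big(N|\mathcal{V}|\big)$ (uniformly over all $k\in[K+1]$) suffices, which is precisely the ``$\epsilon$ small enough'' hypothesis. Combining the two cases yields $\cut(\mathcal{S})\ge\epsilon\,p_0(|\mathcal{V}|-p_0)$ for every admissible $\mathcal{S}$, and together with the value attained at $\mathcal{S}=\mathcal{S}_{k-1}$ this shows the minimum equals $\epsilon\,|\mathcal{S}_{k-1}|(|\mathcal{V}|-|\mathcal{S}_{k-1}|)$, as claimed.
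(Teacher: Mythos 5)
Your proof is correct and follows essentially the same route as the paper's: achievability at $\mathcal{S}=\mathcal{S}_{k-1}$, the downward-parabola monotonicity of $x(|\mathcal{V}|-x)$ on $[|\mathcal{S}_{k-1}|,|\mathcal{V}|-|\mathcal{S}_{k-1}|]$ for one case, and a weight-one crossing edge plus the smallness of $\epsilon$ for the other. The only (harmless) difference is that you split cases on whether $\mathcal{S}$ contains a heavy node, whereas the paper splits on $|\mathcal{S}|$ and then argues a large $\mathcal{S}$ must contain a relay other than $s$.
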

\begin{proof}
See appendix C.
\end{proof}

Now based on Theorem 1, we write (1) as 
\begin{align}
D_k &\ge \frac{L_{k-1}}{\min\limits_{
\substack{
            \mathcal{S}_{k-1} \subseteq \mathcal{S} \subseteq \mathcal{V}\\
             \mathcal{S}_{k} \not \subseteq \mathcal{S}}
 }  \cut(\mathcal{S})}\\
& \overset{(a)}{=} \frac{L_{k-1}}{ \epsilon \times |\mathcal{S}_{k-1}|\times (|\mathcal{V}|-|\mathcal{S}_{k-1}|)},
\end{align}
where (a) follows by Lemma 1. Therefore
\begin{align}
D \ge \sum_{k=1}^{K+1} \frac{L_{k-1}}{\epsilon \times |\mathcal{S}_{k-1}|\times (|\mathcal{V}|-|\mathcal{S}_{k-1}|)}.
\end{align}
The  R.H.S. of  (9) is minimized when $\mathcal{S}_{k}$ is an arbitrary  subset of $\mathcal{V}_{f_{k}}$ with size $\alpha$ for any $k\in [K]$\footnote{
This is due to the fact that the function $f(x) = x(|\mathcal{V}| - x)$ for $x \in \{1,2,\ldots,\alpha \}$ is increasing since $\alpha \le |\mathcal{V}|/2$.
}. Hence, we have
\begin{align}
D_{\text{Optimum}}\times \epsilon =  \frac{L_0}{|\mathcal{V}|-1}+ \frac{1}{ \alpha \times (|\mathcal{V}|- \alpha)}\sum_{k=1}^{K} L_{k}.
\end{align}
Consider $L_0=L_1=\ldots=L_K$. In this case, we conclude that by redundantly function computation, one can achieve the normalized  delay of $D_{\text{Optimum}}\times \epsilon / L_0 = \frac{1}{|\mathcal{V}|-1}+ \frac{K}{\alpha \times (|\mathcal{V}|-\alpha)}$. Also,  if we utilize no-redundancy  approach, we achieve the normalized delay of $D_{\text{No-redundancy}}\times \epsilon / L_0 = \frac{K+1}{|\mathcal{V}|-1}$, where we have $|\mathcal{S}_k|= 1$ for any $k\in [K]$ in this approach
\footnote{
This fact is due to  equation (9).}. If   $K$ is large, and also $\alpha \ll |\mathcal{V}|$, then we conclude that the improvement is approximately equal to $\alpha$, i.e., 
$$\frac{D_{\text{No-redundancy}}}{D_{\text{Optimum}}} \approx \alpha.$$

We notice that while the parameter $\alpha$   indicates  the available processing  cost of each function, in this example, it can  simultaneously  model the reliability measure. In other words, if we set $\alpha$  to be larger, then any function is computed in more nodes of the network and hence, the system is more  reliable against the node failures in functions computation. This means that as   $\alpha$ increases, the system becomes more reliable. 

Now we notice that for  this example, the fundamental trade-offs among end-to-end delay, reliability and processing cost are obtained. 
In Figure 4, we plot the normalized delay in two cases; the no-redundancy approach and the proposed method. Here, an order-wise improvement in the normalized end-to-end delay is observed. 
Note that  in this plot, $\alpha$ represents the processing cost and simultaneously it models the reliability in the function computation.

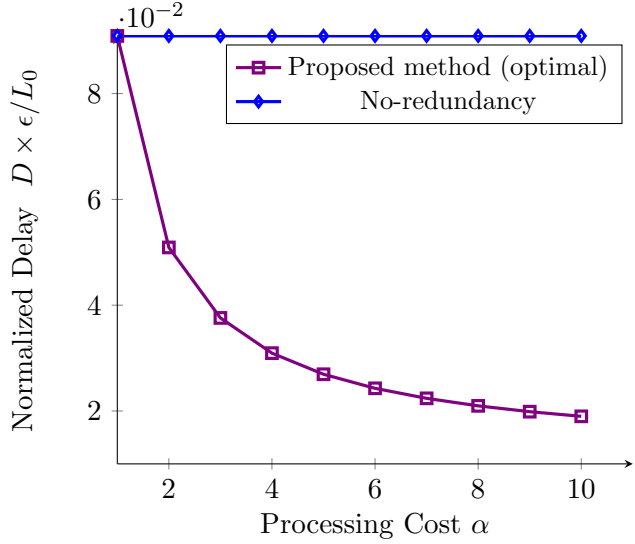
\begin{figure}
\centering
\begin{tikzpicture}
\begin{axis}[
    axis lines = left, 
    xlabel = {Processing Cost~$\alpha$},
    ylabel = {Normalized Delay ~$D \times \epsilon/ L_0$},
     every axis plot/.append style={very thick},
     xmin = 1,xmax = 11,
     ymin = 1/100,ymax = 1/11
]
\addplot [
    domain=1:10, 
    samples=10, 
    color=violet,
        mark=square,
]
{1/99+8/(x*(100-x)) };
\addlegendentry{\small{Proposed method (optimal)}}

\addplot [
    domain=1:10, 
    samples=10, 
    color=blue,
     mark=diamond,
]
{1/11 };
\addlegendentry{\small{No-redundancy}}
\end{axis}
\end{tikzpicture}
\caption{Fundamental trade-off between delay and processing power (or  equivalently reliability) in complete graph. Here $|\mathcal{V}| = 100$, $N = 10$ and $K = 8$.}
\end{figure}

\section{Algorithms}
In this section, we propose our  algorithms for solving the optimization problems (3) and (4).
First we propose an approximation algorithm for solving (3), called greedy algorithm. 
Then, we propose an algorithm for solving (4) in polynomial times. This algorithm is called $\alpha-$optimal algorithm. 
The third algorithm is the same as the greedy algorithm, except  it aims  to solve (4).
We note that all the proposed algorithms are based on dynamic programming.

\subsection{Greedy Algorithm}

\begin{algorithm}[t]
\caption{Greedy  Algorithm}
\begin{algorithmic}[1]
\Procedure{Greedy}{$\mathcal{G}, s,d,\{L_k\}_{k=0}^K \{ \mathcal{V}_{f_k}\}_{k \in [K]}$}
\For{$k \in [K+1]\cup \{0\}$}
\State $\mathcal{P}_k \gets \emptyset$
\State $\mathfrak{B}_k\gets \{   \mathcal{S} \subseteq \mathcal{V}_{f_k} :   |\mathcal{S}| = 1\}$
\EndFor
\State $D \gets \infty$
\While {$1$}
\State $OPT(\{s\})\gets \{s\}$
\State $C(\{s\})\gets 0$
\For{$k \in [K+1]$}
\For{$\mathcal{S} \in \mathfrak{B}_k$}
\State $\hat{\mathcal{S}} \gets  \text{argmin}_{{\mathcal{T}}\in \mathfrak{B}_{k-1} } \{ C({\mathcal{T}}) + \frac{L_{k-1}}{\text{mincut}({\mathcal{T}};\mathcal{S})}\}$
\State $C(\mathcal{S}) \gets C(\hat{\mathcal{S}}) +\frac{L_{k-1}}{\text{mincut}(\hat{\mathcal{S}};\mathcal{S})}$
\State $OPT(\mathcal{S}) \gets (OPT(\hat{\mathcal{S}}),\mathcal{S})$
\EndFor
\EndFor
\If {$D \le C(\{d\})$} 
\State \textbf{break}
\EndIf
\State $ (\mathcal{P}_0,\mathcal{P}_1,\ldots, \mathcal{P}_{K+1})\gets OPT(\{d\})$ 
\State $D \gets C(\{d\})$ 
\For{$k \in [K+1]\cup \{0\}$}
\State $\mathfrak{B}_k\gets \{   \mathcal{S}\cup \mathcal{P}_k: \mathcal{S} \subseteq \mathcal{V}_{f_k} ;  |\mathcal{S}| = 1\}$
\EndFor
\EndWhile
\State \textbf{return} $OPT(\{d\})$ 
\EndProcedure
\end{algorithmic}
\end{algorithm}

In this part, we propose our algorithm for solving the following optimization problem

\begin{align}
D^* = \min_{\forall k\in[K]: \mathcal{S}_k \subseteq \mathcal{V}_k}  \Big \{  \sum_{k'=1}^{K+1} \frac{L_{k'-1}}{\min_{
\substack{
            \mathcal{S}_{k'-1} \subseteq \mathcal{S} \subseteq \mathcal{V}\\
             \mathcal{S}_{k'} \not \subseteq \mathcal{S}}
} \cut(\mathcal{S})} \Big \} \label{13}.
\end{align}
In this optimization problem, our aim is to  find the sets 
$\{ \mathcal{S}_k\}_{k \in [K]}$ in order to minimize the end-to-end delay.
First we note that  the optimization problem
\begin{align}
\text{mincut}(\mathcal{S}_{k-1};\mathcal{S}_k):=\min_{
\substack{
            \mathcal{S}_{k-1} \subseteq \mathcal{S} \subseteq \mathcal{V}\\
             \mathcal{S}_{k} \not \subseteq \mathcal{S}}
} \cut(\mathcal{S}),
\end{align}
is known as  \textit{Max-flow} problem. This problem can be solved via efficient known algorithms such as Ford-Fulkerson algorithm \cite{alg}. Therefore, the optimization problem (11) can be rewritten as
\begin{align}
\min_{\forall k\in[K]: \mathcal{S}_k \subseteq \mathcal{V}_k }  \Big \{  \sum_{k'=1}^{K+1} \frac{L_{k'-1}}{\text{mincut}(\mathcal{S}_{k'-1};\mathcal{S}_k')} \Big \} \label{12}.
\end{align}

Let us define
\begin{align}
C(\mathcal{S}_k):= \min_{\forall k'\in[k-1]: \mathcal{S}_{k'} \subseteq \mathcal{V}_{f_{k'}} }  \Big \{  \sum_{k''=1}^{k} \frac{L_{k''-1}}{\text{mincut}(\mathcal{S}_{k''-1};\mathcal{S}_{k''})} \Big \},
\end{align}
for any $\mathcal{S}_k \subseteq \mathcal{V}_{f_k}$ and any $k\in [K]$. 
Considering  $C( \{s\}) = 0$ and $D^* = C(\mathcal{S}_{K+1})=C(\{d\})$ from (14), we achieve the following recursive equation
\begin{align}
C(\mathcal{S}_k) = \min_{\mathcal{S}_{k-1} \subseteq \mathcal{V}_{f_{k-1}} } \Big \{   C(\mathcal{S}_{k-1}) + \frac{L_{k-1}}{\text{mincut}(\mathcal{S}_{k-1};\mathcal{S}_{k})}\Big \},
\end{align}
for any $k \in [K+1]$. 
The recursive equation (15) shows that the problem of finding the optimum subsets is in the form of dynamic programming and can be solved recursively \cite[Chapter 6]{algkl}.

Now for estimation of the solution of  (\ref{12}), first we use dynamic programming, according to the above discussions,  to obtain one-element sets with minimum delay.
In other words,   we initiate the algorithm by choosing optimum one-element sets.
Then, we enlarge the sets in a greedy manner as long as the overall delay decreases.
 More precisely, at each round, we choose one function computing node from all  the function computing nodes which have not been selected yet, and then, add it to the corresponding subset when this change decreases the end-to-end delay. 
 This procedure is performed, again, via dynamic programming (see lines 8-16 of Algorithm 1) and ends when there is not any function computing  node that adding it to the subsets decreases the delay (see line 17 of Algorithm 1). 
 See Algorithm 1 for the details.

Here,  we provide more explanation about  Algorithm 1.
The parameter $D$ shows the end-to-end delay calculated for each round of greedy algorithm.
 $OPT(\{d\})$ denotes the optimum subsets, chosen from $\mathfrak{B}_k$'s, $k\in [K+1]\cup \{0\}$, resulting the minimum delay. The sets $\mathfrak{B}_k$ for $k\in [K+1]\cup \{0\}$, at first, include  one-element subsets, and then enlarge in a greedy manner via Algorithm 1. 
 Finally at the round that the algorithm does not have any improvement (line 17), the program terminates.
 The final output of the algorithm is the resulting subsets in the final round.

  The complexity of the greedy algorithm is $A \times O((K+1)|\mathcal{V}|^{3})$ where $A$ is the complexity of the algorithm which is used to obtain the value of min-cuts. 
  This can be briefly explained as follows. 
  At each round of dynamic programming based algorithm (see lines 8-16 of Algorithm 1), the function $C(.)$ must be computed for each layer $k$, $k \in [K+1]$. This is due to the fact that,  finally, we want to compute $C(\{d\})$. 
  This procedure ends after  at most $(K+1) \times |\mathcal{V}| \times |\mathcal{V}|$ times of min-cut computation (see lines 10-16 of Algorithm 1). 
  In addition, due to the greedy  addition of nodes, the aforementioned procedure may be repeated at most $ |\mathcal{V}|$ times in the algorithm. Hence, the complexity of the algorithm is $A \times O((K+1)|\mathcal{V}|^{3})$.

\subsection{$\alpha-$optimal Algorithm}

In this part, we aim to solve the following optimization problem
\begin{align}
\min_{
\substack{
                         \forall k\in[K]: \mathcal{S}_k \subseteq \mathcal{V}_k \\
             |\mathcal{S}_k | \le \alpha_k}
}  \Big \{  \sum_{k'=1}^{K+1} \frac{L_{k'-1}}{\min_{
\substack{
            \mathcal{S}_{k'-1} \subseteq \mathcal{S} \subseteq \mathcal{V}\\
             \mathcal{S}_{k'} \not \subseteq \mathcal{S}}
} \cut(\mathcal{S})} \Big \}.
\end{align}

Without loss of generality, assume that $\alpha_k=\alpha$ for any $k$. This is just for simplicity in notation and the proposed algorithm can be considered for the general case.

To solve (16), we propose an algorithm in this part, which is  called  $\alpha-$optimal algorithm. This algorithm is based on dynamic programming to obtain the sets $\{\mathcal{S}_k\}_{k\in[K]}$. 
The algorithm searches among all  the subsets of  $\mathcal{V}_{f_{k}}$ with at most $\alpha$ elements, for any $k$, and chooses the best set to minimize the end-to-end delay, according to the recursive equation of dynamic programming. 
Similar to (15),  the recursive equation of (16) is given by
\begin{align}
C(\mathcal{S}_k) = \min_{
\substack{
\mathcal{S}_{k-1} \subseteq \mathcal{V}_{f_{k-1}}\\
 |\mathcal{S}_{k-1}| \le \alpha}
 } \Big \{   C(\mathcal{S}_{k-1}) + \frac{L_{k-1}}{\text{mincut}(\mathcal{S}_{k-1};\mathcal{S}_{k})}\Big \}.
\end{align}
The $\alpha-$optimal algorithm  computes the function $C(.)$ for each set with at most $\alpha$ elements by  dynamic programming. 
The details can be found in Algorithm 2.
The notations used in this algorithm are also the same as Algorithm 1.

We note that the complexity of the $\alpha$-optimal algorithm is $A\times O((K+1)|\mathcal{V}|^{2\alpha})$ where $A$ is the complexity of the algorithm which is developed to find the min-cuts. 
The remarkable feature of Algorithm 2 is that it is an optimal algorithm with polynomial complexity of the network size.
Let us provide more explanation about the complexity of this algorithm.

In this algorithm, we use dynamic programming only once. 
In that procedure, the function $C(.)$ must be computed for each layer $k$,  and for each possible $\mathcal{S}_k$, where $k\in [K+1]$. 
We note that the number of  subsets of $\mathcal{V}_{f_{k}}$ with at most $\alpha$ elements, for any $k$, is $O(|\mathcal{V}|^{\alpha})$. 
Hence, by taking lines 7-13 of Algorithm 2 into consideration, we conclude that it is necessary  to compute min-cuts  for at most $O ((K+1)\times |\mathcal{V}|^{\alpha} \times |\mathcal{V}|^{\alpha})$ times. 
This shows that the complexity of the proposed algorithm is  $A\times O((K+1)|\mathcal{V}|^{2\alpha})$.

\begin{algorithm}[t]
\caption{$\alpha$-optimal Algorithm} 
\begin{algorithmic}[1]
\Procedure{$\alpha$-optimal}{$\mathcal{G}, s,d,\{L_k\}_{k=0}^K \{ \mathcal{V}_{f_k}\}_{k \in [K]},\alpha$}
\For{$k \in [K+1]\cup \{0\}$}
\State $\mathfrak{B}_k\gets \{   \mathcal{S} \subseteq \mathcal{V}_{f_k} :   |\mathcal{S}| \le \alpha\}$
\EndFor
\State $OPT(\{s\})\gets \{s\}$
\State $C(\{s\})\gets 0$
\For{$k \in [K+1]$}
\For{$\mathcal{S} \in \mathfrak{B}_k$}
\State $\hat{\mathcal{S}} \gets  \text{argmin}_{{\mathcal{T}}\in \mathfrak{B}_{k-1} } \{ C({\mathcal{T}}) + \frac{L_{k-1}}{\text{mincut}({\mathcal{T}};\mathcal{S})}\}$
\State $C(\mathcal{S}) \gets C(\hat{\mathcal{S}}) +\frac{L_{k-1}}{\text{mincut}(\hat{\mathcal{S}};\mathcal{S})}$
\State $OPT(\mathcal{S}) \gets (OPT(\hat{\mathcal{S}}),\mathcal{S})$
\EndFor
\EndFor
\State \textbf{return} $OPT(\{d\})$ 
\EndProcedure
\end{algorithmic}
\end{algorithm}
 
\subsection{$\alpha-$greedy Algorithm}

We notice that the complexity of the $\alpha-$optimal algorithm is not equal to a polynomial of $\alpha$. 
To solve this issue,  we introduce  $\alpha-$greedy algorithm as an approximation algorithm. 
This algorithm is exactly like Algorithm 1, except  it does not consider  subsets with more than $\alpha$ elements (see lines 23-25 of Algorithm 3).
 Details of this algorithm can be found in Algorithm 3.
 We note that the complexity of this algorithm is at most $A \times O((K+1)|\mathcal{V}|^{3})$,  like  the greedy algorithm.

\begin{algorithm}[t]
\caption{$\alpha$-greedy  Algorithm}
\begin{algorithmic}[1]
\Procedure{$\alpha$-greedy}{$\mathcal{G}, s,d,\{L_k\}_{k=0}^K \{ \mathcal{V}_{f_k}\}_{k \in [K]},\alpha$}
\For{$k \in [K+1]\cup \{0\}$}
\State $\mathcal{P}_k \gets \emptyset$
\State $\mathfrak{B}_k\gets \{   \mathcal{S} \subseteq \mathcal{V}_{f_k} :   |\mathcal{S}| = 1\}$
\EndFor
\State $D \gets \infty$
\While {$1$}
\State $OPT(\{s\})\gets \{s\}$
\State $	C(\{s\})\gets 0$
\For{$k \in [K+1]$}
\For{$\mathcal{S} \in \mathfrak{B}_k$}
\State $\hat{\mathcal{S}} \gets  \text{argmin}_{{\mathcal{T}}\in \mathfrak{B}_{k-1} } \{ C({\mathcal{T}}) + \frac{L_{k-1}}{\text{mincut}({\mathcal{T}};\mathcal{S})}\}$
\State $C(\mathcal{S}) \gets C(\hat{\mathcal{S}}) +\frac{L_{k-1}}{\text{mincut}(\hat{\mathcal{S}};\mathcal{S})}$
\State $OPT(\mathcal{S}) \gets (OPT(\hat{\mathcal{S}}),\mathcal{S})$
\EndFor
\EndFor
\If {$D \le C(\{d\})$} 
\State \textbf{break}
\EndIf
\State $ (\mathcal{P}_0,\mathcal{P}_1,\ldots, \mathcal{P}_{K+1})\gets OPT(\{d\})$ 
\State $D \gets C(\{d\})$ 
\For{$k \in [K+1]\cup \{0\}$}
\If {$|\mathcal{P}_k| < \alpha$}
\State $\mathfrak{B}_k\gets \{   \mathcal{S}\cup \mathcal{P}_k: \mathcal{S} \subseteq \mathcal{V}_{f_k} ;  |\mathcal{S}| = 1\}$
\EndIf
\EndFor
\EndWhile
\State \textbf{return} $OPT(\{d\})$ 
\EndProcedure
\end{algorithmic}
\end{algorithm}

\section{Numerical Simulations}
In this section, we evaluate the performance of the proposed algorithms by numerical simulations. 
To this end, we consider a network which is generated at random and then compare the performance of the proposed algorithms with conventional no-redundancy approach. 
We  note that although  random network construction is not necessarily the same as a given practical scenario, it is a reasonable assumption for arguing that the algorithms lead to reasonable performance in typical network scenarios.

Our setup for simulations in this section is similar to Example 2 (see Figure 3), except   we assume that the capacities are not  equal.
The reason that we choose this model for our simulations is that the performance of our algorithms  must be evaluated in the cases that varying the subsets changes the flow of the network as much as possible. 
This  makes it possible to evaluate the performance, in the worst-case scenario.
 
We assume that $L_0=L_1=\ldots=L_K$. Also we assume that the capacities of the links are chosen randomly and independently. 
The capacity of each   link is chosen uniformly from the set $(1-U,1+U)$, with probability $p$, and is set to be zero, with probability $1-p$. Here $U$ and $p$ are two parameters. For the initialization, we assume that $N=K=10$ and $\alpha = 2$. Also we set $p=1$ and $U = 0.5$. In all of the simulations, we iterate the algorithms 10 times for the independent inputs and then we consider the average performance as the output.

In Figure \ref{figN}, we examine the performance of the proposed  algorithms versus $N$. Remember that the parameter $N$  corresponds to the number of nodes that can compute a specific function. 
From  Figure \ref{figN},  we observe that the 2-greedy algorithm approaches the solution of the 2-optimal algorithm. Also, it is observed that the   greedy algorithm can improve the performance of the system when $N$ enlarges. This fact is also motivated theoretically (see Figure 4).

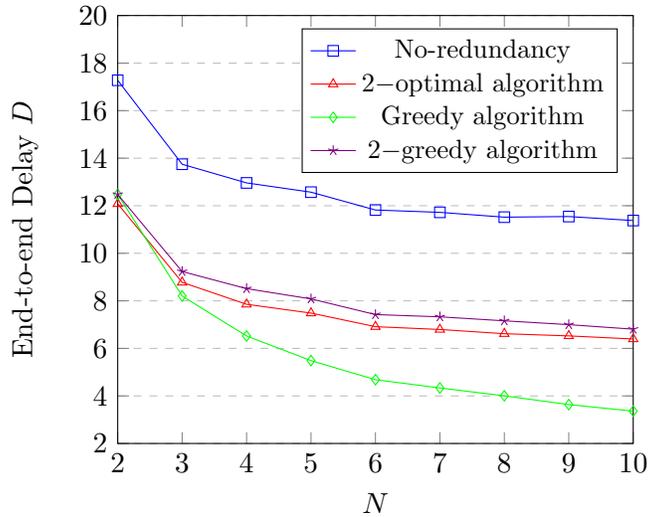
\begin{figure}[H]
\begin{center}
\begin{tikzpicture}
\begin{axis}[
    xlabel={$N$ },
    ylabel={End-to-end Delay $D$},
    xmin=2, xmax=10,
    ymin=2, ymax=20,
    xtick={2,3,4,5,6,7,8,9,10},
    ytick={2,4,6,8,10,12,14,16,18,20},
    legend pos=north east,
    ymajorgrids=true,
    grid style=dashed,
]
 \addplot[
    color=blue,
    mark=square,
    ]
    coordinates {
    (2,17.276984126984125)(3,13.745238095238093)(4,12.96031746031746)(5,12.57063492063492)(6,11.820238095238093)(7,11.721825396825395)(8,11.519444444444442)(9,11.545634920634921)(10,11.377777777777775)
    };
\addlegendentry{\small{No-redundancy}}
    
    \addplot[
    color=red,
    mark=triangle,
    ]
    coordinates {
    (2,12.080579589828817)(3,8.778197859931606)(4,7.854546549941287)(5,7.48428756915599)(6,6.912750626566417)(7,6.794770013268464)(8,6.616993842071241)(9,6.525538724261634)(10,6.39488256841198)
    };
\addlegendentry{\small{$2-$optimal algorithm}}

\addplot[
    color=green,
    mark=diamond,
    ]
    coordinates {
    (2,12.476146114807104)(3,8.203433038721528)(4,6.520314246334911)(5,5.485011378779412)(6,4.682878023439423)(7,4.33490869458815)(8,4.005390115716918)(9,3.635625750025577)(10,3.361662206200597)
    };
\addlegendentry{\small{Greedy algorithm}}

\addplot[
    color=violet,
    mark=star,
    ]
    coordinates {
    (2,12.476146114807104)(3,9.232312007737704)(4,8.51325173364647)(5,8.081687073676239)(6,7.421703910984097)(7,7.327061947340584)(8,7.159199821506322)(9,6.997939229672975)(10,6.807594495287994)
    };
\addlegendentry{\small{$2-$greedy algorithm}}
 \end{axis}
\end{tikzpicture}
\end{center}
\caption{
End-to-end delay versus $N$.}
\label{figN}
\end{figure}

In Figure \ref{figK}, we investigate the performance of the proposed algorithms versus $K$ which  is the number of functions that must be  computed in the chain.
From Figure 6, the result of the 2-greedy algorithm approaches  to the 2-optimal algorithm. Also, the greedy algorithm outperform one-element sets or two-elements.

\begin{figure}[H]
\begin{center}
\begin{tikzpicture}
\begin{axis}[
    xlabel={$K$ },
    ylabel={End-to-end Delay $D$},
    xmin=2, xmax=10,
    ymin=1, ymax=14,
    xtick={2,3,4,5,6,7,8,9,10},
    ytick={2,4,6,8,10,12,14},
    legend pos=north west,
    ymajorgrids=true,
    grid style=dashed,
]
 \addplot[
    color=blue,
    mark=square,
    ]
    coordinates {
    (2,3.328571428571428)(3,4.225)(4,5.388492063492063)(5,6.265079365079365)(6,7.299999999999999)(7,8.322222222222221)(8,9.30674603174603)(9,10.288888888888888)(10,11.37063492063492)
    };
\addlegendentry{\small{No-redundancy}}
    
    \addplot[
    color=red,
    mark=triangle,
    ]
    coordinates {
    (2,2.298140335756435)(3,2.731744188903632)(4,3.253818369453044)(5,3.780982480711582)(6,4.257378740970071)(7,4.816787065703474)(8,5.390879772961815)(9,5.813121037888839)(10,6.399833362372061)
    };
\addlegendentry{\small{$2-$optimal algorithm}}

\addplot[
    color=green,
    mark=diamond,
    ]
    coordinates {
    (2,1.872268295251292)(3,2.067724237774228)(4,2.152179321965094)(5,2.378707995565981)(6,2.439293367491833)(7,2.772939523965115)(8,2.963873100614916)(9,2.991066219624847)(10,3.232640545297129)
    };
\addlegendentry{\small{Greedy algorithm}}

\addplot[
    color=violet,
    mark=star,
    ]
    coordinates {
    (2,2.466673702974012)(3,2.938941244514)(4,3.450738979802447)(5,3.973025228984981)(6,4.520599056086672)(7,5.131762617327632)(8,5.654681575734206)(9,6.089511297096435)(10,6.754581752956365)
    };
\addlegendentry{\small{$2-$greedy algorithm}}
 \end{axis}
\end{tikzpicture}
\end{center}
\caption{
End-to-end delay versus $K$.}
\label{figK}
\end{figure}
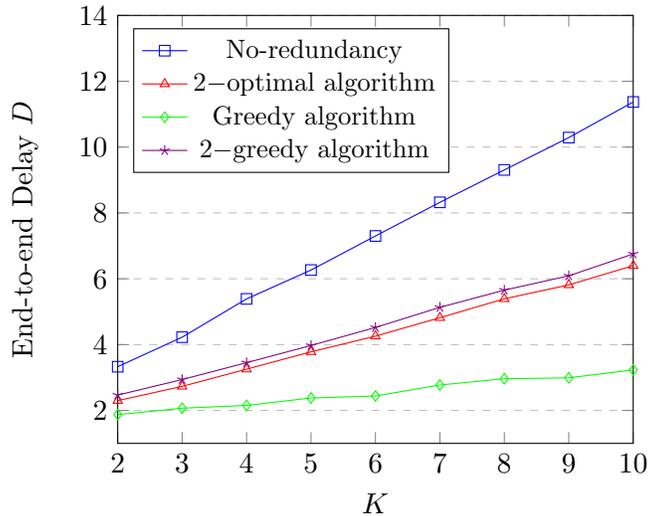

 Figure \ref{figa} demonstrates the end-to-end delay versus the parameter $\alpha$. It is observed that the greedy algorithm is near optimal, where for large $\alpha$, the performance of the greedy algorithm is close to the optimal case.

\begin{figure}[H]
\begin{center}
\begin{tikzpicture}
\begin{axis}[
    xlabel={$\alpha$ },
    ylabel={End-to-end Delay $D$},
    xmin=2, xmax=10,
    ymin=1, ymax=12,
    xtick={2,3,4,5,6,7,8,9,10},
    ytick={2,4,6,8,10,12},
    legend pos=north west,
    ymajorgrids=true,
    grid style=dashed,
]
 \addplot[
    color=blue,
    mark=square,
    ]
    coordinates {
    (2,7.468864468864469)(3,7.47021312021312)(4,7.480769230769231)(5,7.51080586080586)(6,7.497136197136197)(7,7.516117216117218)(8,7.486996336996339)(9,7.536996336996339)(10,7.462637362637364)
    };
\addlegendentry{\small{No-redundancy}}

\addplot[
    color=green,
    mark=diamond,
    ]
    coordinates {
    (2,1.91233177434261)(3,1.839932492032854)(4,1.905964265379947)(5,1.861576657675898)(6,1.883772594225156)(7,1.843952490465797)(8,1.878315203118342)(9,1.880817666635403)(10,1.894997678399945	)
    };
\addlegendentry{\small{Greedy algorithm}}

\addplot[
    color=violet,
    mark=star,
    ]
    coordinates {
    (2,4.285062972454278)(3,3.213580793257941)(4,2.751046052351586)(5,2.40179208932679)(6,2.222738478106111)(7,2.065255381227234)(8,1.990829096205961)(9,1.928057326122624)(10,1.894997678399945)
    };
\addlegendentry{\small{$\alpha-$greedy algorithm}}
 \end{axis}
\end{tikzpicture}
\end{center}
\caption{
End-to-end delay versus $\alpha$.}
\label{figa}
\end{figure}
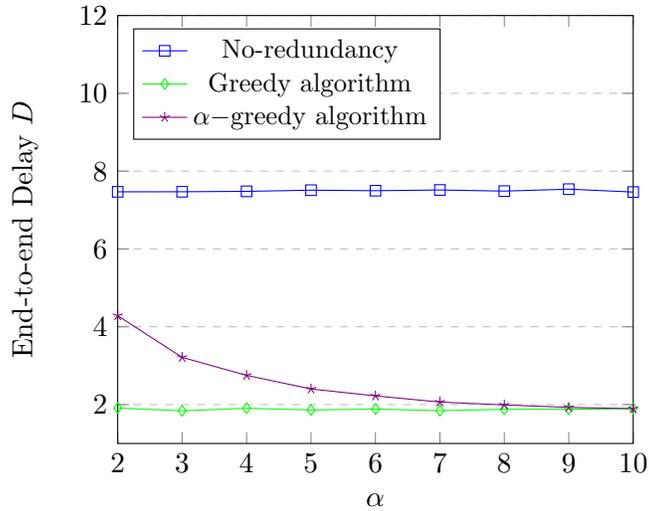

Figure \ref{figp} demonstrates the performance of the proposed algorithms versus $p$. Note that $p$   corresponds to the connectivity of the network. The simulations show that the proposed algorithms have a good performance, even when the network is sparse (small values for $p$).

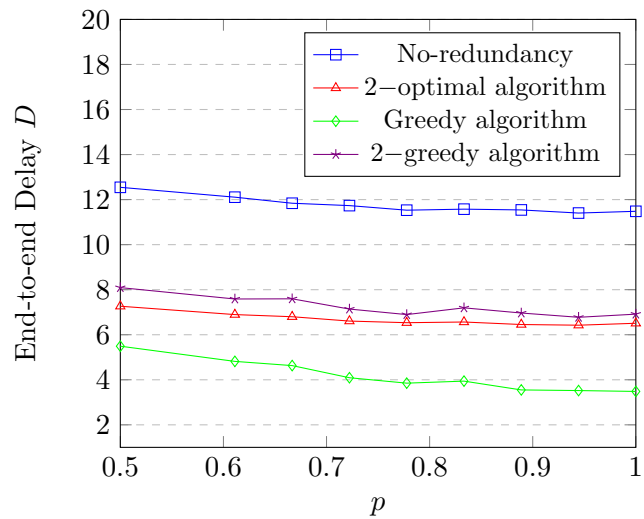
\begin{figure}[H]
\begin{center}
\begin{tikzpicture}
\begin{axis}[
    xlabel={$p$ },
    ylabel={End-to-end Delay $D$},
    xmin=0.5, xmax=1,
    ymin=1, ymax=20,
    xtick={0.5,0.6,0.7,0.8,0.9,1},
    ytick={2,4,6,8,10,12,14,16,18,20},
    legend pos=north east,
    ymajorgrids=true,
    grid style=dashed,
]
 \addplot[
    color=blue,
    mark=square,
    ]
    coordinates {
    (0.5,12.545238095238094)
    (0.611111111111111,12.107539682539683)
    (0.666666666666667,11.838492063492062)
    (0.722222222222222,11.733333333333333)
    (0.777777777777778,11.530158730158732)(0.833333333333333,11.57579365079365)(0.888888888888889,11.541269841269841)(0.944444444444444,11.402777777777779)(1,11.480555555555554)
    };
\addlegendentry{\small{No-redundancy}}
    
    \addplot[
    color=red,
    mark=triangle,
    ]
    coordinates {
    (0.5,7.265870322487968)(0.611111111111111,6.894866975130133)(0.666666666666667,6.799686480530133)(0.722222222222222,6.606935230232443)(0.777777777777778,6.53477849979398)(0.833333333333333,6.565316855487133)(0.888888888888889,6.453922560927206)(0.944444444444444,6.42373273870952)(1,6.509583158499568)
    };
\addlegendentry{\small{$2-$optimal algorithm}}

\addplot[
    color=green,
    mark=diamond,
    ]
    coordinates {
    (0.5,5.491830295811345)(0.611111111111111,4.817223447284746)(0.666666666666667,4.630304758863023)(0.722222222222222,4.091143187408703)(0.777777777777778,3.850823296222963)(0.833333333333333,3.940923682293797)(0.888888888888889,3.550051540530008)(0.944444444444444,3.521706945251955)(1,3.482850913569882)
    };
\addlegendentry{\small{Greedy algorithm}}

\addplot[
    color=violet,
    mark=star,
    ]
    coordinates {
    (0.5,8.093291151572886)(0.611111111111111,7.587255842712497)(0.666666666666667,7.595105264532509)(0.722222222222222,7.140131720704476)(0.777777777777778,6.895179502300245)(0.833333333333333,7.19416655326098)(0.888888888888889,6.964442355889723)(0.944444444444444,6.775026555831509)(1,6.914310049255869)
    };
\addlegendentry{\small{$2-$greedy algorithm}}
 \end{axis}
\end{tikzpicture}
\end{center}
\caption{
End-to-end delay versus $p$.}
\label{figp}
\end{figure}

At the end, we examine the algorithms for the case that the capacities have large variance, i.e., $U$ is set to be large. It is observed that  in this case the performance of the algorithms does not decay and they have  gains near to the theoretical analysis.
 This shows that the performance of the algorithms is not sensitive to the homogeneity of the capacities.

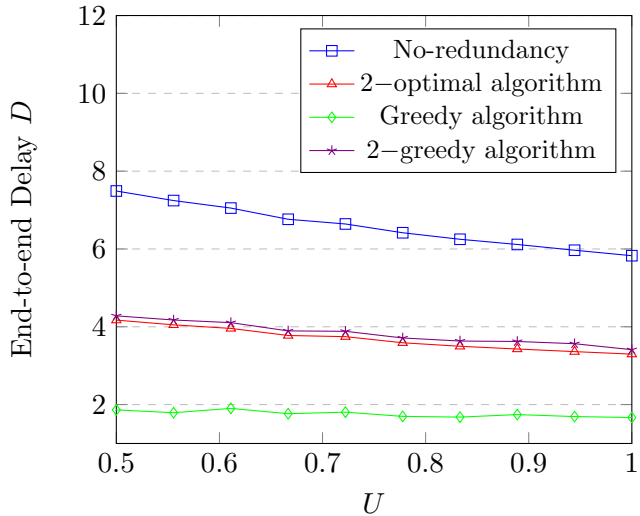
\begin{figure}[H]
\begin{center}
\begin{tikzpicture}
\begin{axis}[
    xlabel={$U$ },
    ylabel={End-to-end Delay $D$},
    xmin=0.5, xmax=1,
    ymin=1, ymax=12,
    xtick={0.5,0.6,0.7,0.8,0.9,1},
    ytick={2,4,6,8,10,12},
    legend pos=north east,
    ymajorgrids=true,
    grid style=dashed,
]
 \addplot[
    color=blue,
    mark=square,
    ]
    coordinates {
    
    (0.5,7.48981018981019) 
    (0.555555555555556,7.243991898484093)
    (0.611111111111111,7.050797476793683)
    (0.666666666666667,6.763176881597936)
    (0.722222222222222,6.642133627557198)
    (0.777777777777778,6.416502463054185)
    (0.833333333333333,6.248868380575697)
    (0.888888888888889,6.115433755816278)
    (0.944444444444444,5.967323014662635)
    (1,5.826775008892151)
    };
\addlegendentry{\small{No-redundancy}}
    
    \addplot[
    color=red,
    mark=triangle,
    ]
    coordinates {
   (0.5,4.170960489063938) 
    (0.555555555555556,4.049211932638372)
    (0.611111111111111,3.957295233748218)
    (0.666666666666667,3.775849318666918)
    (0.722222222222222,3.745700413205633)
    (0.777777777777778,3.588760775074507)
    (0.833333333333333,3.4982796151259)
    (0.888888888888889,3.428353527978326)
    (0.944444444444444,3.361077919199889)
    (1,3.296994506701346)
    };
\addlegendentry{\small{$2-$optimal algorithm}}

\addplot[
    color=green,
    mark=diamond,
    ]
    coordinates {
    (0.5,1.86300281582352) 
    (0.555555555555556,1.791109098043571)
    (0.611111111111111,1.90140701192694)
    (0.666666666666667,1.765589036375048)
    (0.722222222222222,1.80471396392211)
    (0.777777777777778,1.696591607530464)
    (0.833333333333333,1.678013653239156)
    (0.888888888888889,1.742919117364247)
    (0.944444444444444,1.691998349702906)
    (1,1.666367013648918)
    };
\addlegendentry{\small{Greedy algorithm}}

\addplot[
    color=violet,
    mark=star,
    ]
    coordinates {
	(0.5,4.280218134356066) 
    (0.555555555555556,4.173937779186114)
    (0.611111111111111,4.10612821715646)
    (0.666666666666667,3.893701195758189)
    (0.722222222222222,3.883119949163993)
    (0.777777777777778,3.712207897588266)
    (0.833333333333333,3.633325875151576)
    (0.888888888888889,3.621961401202838)
    (0.944444444444444,3.564605014378294)
    (1,3.410739775998693)
    };
\addlegendentry{\small{$2-$greedy algorithm}}
 \end{axis}
\end{tikzpicture}
\end{center}
\caption{
End-to-end delay versus $U$.}
\label{figU}
\end{figure}

For the end of this section, we note that our proposed greedy algorithms are not optimal in general, but our simulations reveal that they have acceptable performance for most scenarios.

\section{Conclusion and Discussion}
In this paper,  we investigated the fundamental limits of end-to-end delay minimization in NFV-based networks. 
It was observed that traditional no-redundancy approaches are not enough for achieving the optimum delay since there are some  examples that show the inefficiency of them. 
To exploit the capacity of such networks, in this paper, it has been assumed that to respond a request,  the functions can be computed redundantly in parallel, in addition to use the network coding in the system. 
It was showed that redundantly function computation and using network coding for delivery, not only improves the reliability measures, but also decreases the transmission delay.
In some cases, this gain would also scale  by the size of the network,
in comparison with the traditional no-redundancy approaches.
The sufficiency of linear codes to exploit the capacity of the system was also shown.
 Then, it was observed that the problem of finding the subsets of the nodes that must compute the functions at each round  in general is related to a complex integer programming problem. 
 To this end,  an approximation algorithm  was  proposed.
 The optimal algorithm with polynomial  complexity for the case that there is an upper bound on the number of nodes computing functions in each round was also provided.
 The performances of the algorithms were evaluated through numerical simulations in several cases, where this was showed that they reach order-wise improvements in comparison with no-redundancy approaches.

For  future work, a number of problems might be of interest.  
First, we note that in this paper, we proposed a method to decrease the end-to-end delay through a theoretical modeling, and  argued that our algorithms lead to  good performance for typical  network topologies.
In this direction, an important  question is to evaluate performance under more realistic network models.
As an example, an interesting direction is to consider central cloud-edge cloud scenarios where the computing capacities of the nodes and link capacities may vary significantly, and in some scenarios, lead to less connectivity in the network. 
Therefore, it is fair to say that   understanding the benefit of redundant computing  in specific network topologies  remains  an open problem.

On another direction, we note that in this paper, we have considered the class of networks which are described by directed graphs. 
However, undirected network models have also been of interest.   
A natural question is that what will be happen if we consider such network models?
We note that if the network consists of two-way links with fixed capacities, then the results of this paper are still valid.
However, if we have constraints on the sum capacities of undirected links, then the problem is challenging and can be addressed as future work. 
Another future direction is to consider more than one request which must be served simultaneously.

We should also  note that from a practical  point of view, our  results suggest that one should consider the trade-off between delay and computation/communication cost in real-world designs. 
We suggest that for each network topology and network functions class, the achievable gain in decreasing the delay, in terms of computation/communication cost should be evaluated. 
Also there is a number of challenges for implementing the redundant computation. 
For example, we note that in this paper, the servers are assumed to be synchronized.
However, they may have different processing times and 
this leads to its own implementation challenges.

Another interesting question that can be investigated in future is how to address the placement of  VNFs, or  design of the network infrastructure, in order to potentially achieve high gains through the proposed model of  redundant computation?
The answer to such question would naturally 	have significant effect on choice of system parameters for a given practical application and the resulting quality of service and experienced at    the user side.

\section*{Acknowledgment}
This work was supported in part by a grant from Institute for Research in Fundamental Sciences (IPM).

\appendix
\section{Proof of Theorem 1}
For the proof of Theorem 1, we use the single-source multicast   theorem \cite{code7}. 
First we add  an extra node, denoted by $EN$, to the infrastructure graph $\mathcal{G}$ and form a  new directed graph $\tilde{\mathcal{G}}$.
In $\tilde{\mathcal{G}}$,  assume that $EN$ is connected to each node of    $\mathcal{S}_{k-1}$ via directed links with infinite capacity.

Consider a  single-multicast problem from $EN$ to all of the nodes in $\mathcal{S}_k$. 
We claim that the resulting auxiliary multicast problem is equivalent to the original problem.
To show this fact, first we note that each solution of the auxiliary multicast problem, is also a solution of the original  problem. 
This is due  to the fact that for each solution of the auxiliary problem, the source node $EN$ first sends the data to the nodes in $\mathcal{S}_{k-1}$, because they are  neighbors of $EN$ in $\tilde{\mathcal{G}}$. 
This procedure does not experience  any delay, because the corresponding links have infinite capacities. 
Subsequently, the nodes in $\mathcal{S}_{k-1}$ deliver the data to the destination nodes $\mathcal{S}_{k}$.
For converse, consider a solution for the original problem. 
We construct a solution for the auxiliary problem with the same delay as the original problem.
Assume that first the node $EN$ transmits  all the data to each node in $\mathcal{S}_{k-1}$. Again, note that this procedure does  not have any delay, due to the infinite capacity of links. Consequently, we apply the solution of the original problem to the auxiliary problem, such that nodes in
$\mathcal{S}_{k-1}$ deliver their common data to each node in $\mathcal{S}_k$.
This completes the proof of the equivalency of two problems.

Now we apply the  single-source multicast   theorem \cite{code7} to the auxiliary problem to compute the minimum delay. 
In the network coding terminology, this theorem shows that the capacity of such system is equal to the capacity of min-cut  in $\tilde{\mathcal{G}}$.
Hence, we have
\begin{align}
D_k& \ge  
 \frac{L_{k-1}}{\min\limits_{v \in \mathcal{S}_k} \text{mincut}(\{EN\};\{v\})}\\
&= \frac{L_{k-1}}{\min\limits_{v \in \mathcal{S}_k}\min\limits_{
\substack{
            \mathcal{S}\subseteq \mathcal{V}\cup\{EN\} \\
             EN \in \mathcal{S} \\
             v\notin \mathcal{S}
             }
} \cut(\mathcal{S})},
\end{align}
where the notation ``mincut'' is defined in (12). 
Now we notice that if $\mathcal{S}_{k-1}\not \subseteq \mathcal{S}$, then $\cut(\mathcal{S})$ is equal to infinity since the capacity of directed links between $EN$ and each node in $\mathcal{S}_{k-1}$ is infinity. Hence, we have
\begin{align}
D_k &\ge \frac{L_{k-1}}{\min\limits_{v \in \mathcal{S}_k}\min\limits_{
\substack{
            \mathcal{S}_{k-1} \subseteq \mathcal{S}\subseteq \mathcal{V}\cup\{EN\} \\
             EN \in \mathcal{S} \\
             v\notin \mathcal{S}
             }
} \cut(\mathcal{S})} \\
& = \frac{L_{k-1}}{\min\limits_{
\substack{
	\mathcal{S}_{k-1} \subseteq \mathcal{S} \subseteq \mathcal{V}\\
	  \mathcal{S}_{k} \not \subseteq \mathcal{S}
	  }
	  } \cut(\mathcal{S})}.
\end{align}
This completes the proof. We emphasize that the capacity of the single-source multicast   problem can be achieved using linear codes \cite{code10}. This proves the achievability of the lower bound, by using the optimal linear network codes in the corresponding transmission round. Note that in the single-source multicast   theorem, for achieving the capacity, it is essential to use the network coding and routing is not generally optimum. 
In addition, the message size (or equivalently $L_{k-1}$) needs to be large enough, in order to ensure existence of capacity achieving network codes.

\section{Proof of Theorem 2}
Based on Theorem 1, we have
\begin{align}
D  = \sum_{k=1}^{K+1} D_k\ge \sum_{k=1}^{K+1} \frac{L_{k-1}}{\min\limits_{
\substack{
	\mathcal{S}_{k-1} \subseteq \mathcal{S} \subseteq \mathcal{V}\\
	  \mathcal{S}_{k} \not \subseteq \mathcal{S}
	  }
}  \cut(\mathcal{S})}.
\end{align}
Also, the above lower bound is  tight, by using the achievable scheme of Theorem 1 in all rounds of data transmission. 
Hence, the above achievable  lower bound on  end-to-end delay just depends on the subsets $\{\mathcal{S}_k\}_{k \in [K]}$. This means that the optimum delay can be achieved by minimizing the above equation as a function of the subsets. This completes the proof.

\section{Proof of Lemma 1}
First notice that for $\mathcal{S} = \mathcal{S}_{k-1}$, we have
$$\cut(\mathcal{S}) = \epsilon \times |\mathcal{S}_{k-1}|\times (|\mathcal{V}|-|\mathcal{S}_{k-1}|).$$
In order to prove the lemma, it suffices to show that for any $\mathcal{S}\subseteq \mathcal{V}$ such that     $\mathcal{S}_{k-1} \subseteq \mathcal{S}$ and $ \mathcal{S}_{k} \not \subseteq \mathcal{S} $, we have $\cut(\mathcal{S}) \ge \epsilon \times |\mathcal{S}_{k-1}|\times (|\mathcal{V}|-|\mathcal{S}_{k-1}|).$

Note that if $ |\mathcal{S}_{k-1}| \le|\mathcal{S}| \le |\mathcal{V}| - |\mathcal{S}_{k-1}|$, then we have
\begin{align}
\cut(\mathcal{S}) &\ge \epsilon \times |\{e = (u,v) \in \mathcal{E}: u \in \mathcal{S};v \not \in \mathcal{S}\}|\\
&= \epsilon \times |\mathcal{S}| \times (|\mathcal{V}| - |\mathcal{S}|)\\
&\ge \epsilon \times |\mathcal{S}_{k-1}| \times (|\mathcal{V}| - |\mathcal{S}_{k-1}|).
\end{align}
and in this case the proof is completed.

Now consider the case that $ |\mathcal{S}| < |\mathcal{S}_{k-1}| $ or $ |\mathcal{S}| >  |\mathcal{V}| - |\mathcal{S}_{k-1}|$. The case  
$ |\mathcal{S}| < |\mathcal{S}_{k-1}| $
is impossible, due to the fact that $\mathcal{S}_{k-1} \subseteq \mathcal{S}$. Hence, we assume that $ |\mathcal{S}| >  |\mathcal{V}| - |\mathcal{S}_{k-1}| \ge KN+N+1-N = KN+1$. This means that $ \mathcal{S}$ contains at least one relay node which is distinct of $s$, such as $\tilde{u} \in \mathcal{S}$. 
Since $ \mathcal{S}_{k} \not \subseteq \mathcal{S} $, there is a node $\tilde{v} \in \mathcal{S}_{k}$ such that $\tilde{v} \not \in \mathcal{S}$.
Let us define $\tilde{e}= (\tilde{u},\tilde{v})$.  Therefore, we write
\begin{align}
\cut(\mathcal{S}) \ge w(\tilde{e}) = 1\overset{(a)}{\ge} \epsilon \times |\mathcal{S}_{k-1}| \times (|\mathcal{V}| - |\mathcal{S}_{k-1}|),
\end{align}
where (a) follows from the fact that $\epsilon$ is set to be small enough. 
This completes the proof.

\end{document}